\documentclass[a4paper,UKenglish,cleveref, autoref, thm-restate]{lipics-v2021}
\hideLIPIcs  %uncomment to remove references to LIPIcs series (logo, DOI, ...), e.g. when preparing a pre-final version to be uploaded to arXiv or another public repository

\nolinenumbers

\usepackage{latexsym,amsmath,textcomp,pifont,marvosym,wasysym,amssymb}
\usepackage{graphicx}
\usepackage{xcolor}
\usepackage{mathtools}
\usepackage{booktabs}
\usepackage{bbold}

\usepackage{numprint}
\usepackage{hyperref}

\usepackage{tikz}
\usetikzlibrary{calc}

\usepackage{pgfplots}
\usepgfplotslibrary{external}
\tikzsetexternalprefix{plots/}

\newif\ifpdfplots
\pdfplotstrue
\usepackage{algorithmic}
\usepackage[algo2e, ruled, vlined, linesnumbered]{algorithm2e}
\DontPrintSemicolon
\SetKwProg{Fn}{Function}{}{end}
\SetKwFor{ParallelWhile}{while}{do in parallel}{}
\SetKwFor{ParallelDo}{parallel do}{}{}
\SetKwFor{ParallelFor}{for}{do parallel}{}
\SetKwInput{KwInput}{Input}
\SetKwInput{KwOutput}{Output}

\title{High-Quality Multi-Constraint Hypergraph Partitioning via Greedy Rebalancing}

\author{Nikolai Maas}{Karlsruhe Institute of Technology, Karlsruhe, Germany}{nikolai.maas@kit.edu}{https://orcid.org/0009-0002-6959-417X}{}

\authorrunning{N. Maas}

\Copyright{Nikolai Maas} %TODO mandatory, please use full first names. LIPIcs license is "CC-BY";  http://creativecommons.org/licenses/by/3.0/

\ccsdesc[500]{Theory of computation~Design and analysis of algorithms}

\keywords{Hypergraph Partitioning, Multi-Constraint Partitioning, Graph Algorithms, Multilevel Algorithms, Local Search, Vector Scheduling, Multidimensional Load Balancing}

\category{} %optional, e.g. invited paper

\relatedversion{} %optional, e.g. full version hosted on arXiv, HAL, or other respository/website
\supplement{}
\supplementdetails[subcategory={}, cite={}, swhid={}]{Source Code}{https://github.com/kahypar/mt-kahypar/tree/esa2026}
\supplementdetails[subcategory={}, cite={}, swhid={}]{Benchmark Set \& Experimental Results}{https://zenodo.org/records/19630135}

\newcommand{\argmax}{\operatorname{arg \, max}}

\newcommand{\gainfn}[1]{\ensuremath{\Delta #1}}

\newcommand{\norm}[1]{\ensuremath{\|#1\|}}

\newcommand{\dimindex}{\ensuremath{j}}

\definecolor{darkgreen}{rgb}{0.0, 0.6, 0.0}

\newcommand{\kommentar}[1]{}
\newcommand{\nikolai}[1]{\kommentar{\color{darkgreen}[NM: #1]}}

\newcommand{\review}[1]{\kommentar{\color{violet}[R: #1]}}

\newcommand{\splitatcommas}[1]{%
  \begingroup
  \begingroup\lccode`~=`, \lowercase{\endgroup
    \edef~{\mathchar\the\mathcode`, \penalty0 \noexpand\hspace{0pt plus 1em}}%
  }\mathcode`,="8000 #1%
  \endgroup
}

\newcommand{\Partitioner}[1]{#1}
\newcommand{\Instance}[1]{\textsf{#1}}

\newcommand{\etal}{et~al.}

\newcommand{\mss}[1]{%
  \ifcase#1\relax
    \or Set~B$_{1\textnormal{-}20}$%  Case 1
    \or Set~B$_{21\textnormal{-}45}$% Case 2
    \or Set~B$_{46\textnormal{-}69}$% Case 3
    \else [Invalid Input]% 
  \fi
}

\begin{document}

\maketitle

\begin{abstract}
    Multi-constraint hypergraph partitioning is a generalization of balanced partitioning, where the vertex set of a hypergraph is partitioned such that the inter-block connectivity of hyperedges is minimized while balancing the vertices with regard to $d$ distinct constraints.
    A prominent class of applications is data distribution tasks, where this allows to achieve good load balance for $d$ different kinds of resources and simultaneously minimize the communication volume.
    %Yet, research on multi-constraint partitioning is sparse compared to the single-constraint case.

    Although the best approaches for single-constraint partitioning are usually complex (multilevel) algorithms with many components,
    we show that replacing only one component already leads to high-quality multi-constraint partitions:
    the rebalancing step, which restores balance for a partition that has (hopefully) small connectivity but violates the constraints.
    We design a multi-constraint rebalancing algorithm based on greedy local search, proving that balance is always restored for $d=2$ and bounded maximum weight.
    The key is to ensure monotonically decreasing global imbalance by choosing an imbalance metric where there is always a balance-improving move available.
    Integrating our algorithm into the state-of-the-art partitioner \Partitioner{Mt-KaHyPar},
    we demonstrate an 11.5\,\% geometric mean connectivity reduction compared to the next best competitor (\Partitioner{Metis}) and better reliability regarding partition balance,
    even though the majority of inputs is outside of the theoretical guarantee.
\end{abstract}

\newpage

\section{Introduction}

Traditional balanced hypergraph partitioning asks for a partition of the vertices of a hypergraph $H = (V, E)$ into $k$ disjoint blocks $V_1, \dots, V_k \subseteq V$ of roughly equal size $|V_i| \le (1 + \varepsilon) \frac{|V|}{k}$ (the balance constraint) while minimizing the inter-block connectivity of the hyperedges.
This is an essential task for many applications, including distributed databases~\cite{UnifyDR, schism}, VLSI circuit design~\cite{vlsiDuttD02, vlsiGuoLCP14}, and scientific simulations~\cite{fluidSimulations}.
However, a single balance constraint is often insufficient to model application requirements.
For example, multi-phase simulations balance the compute load in each phase while minimizing dependencies between processors.
If phases are treated as separate partitioning problems, expensive data migrations are required in between---\nolinebreak
which can be avoided by computing a global partition with multiple distinct balance constraints~\cite{temporalAdaptiveMeshPartitioning, gpMultiPhaseSimulations}.
Another application is distributed processing tasks that need to balance both compute and memory requirements.
As a concrete example, high-quality partitions significantly improve the efficiency of distributed GNN training~\cite{neuGraphGNN, PipeGCN}.

Multi-constraint partitioning models these use cases with $d$-dimensional vertex weights, thereby expressing $d$ different kinds of balance constraints at once.
Despite its practical relevance, there is not much existing work on multi-constraint partitioning.
Current approaches mostly use straightforward adjustments of single-constraint partitioning algorithms and restrict local search procedures to disregard moves that violate any of the $d$ constraints~\cite{kPaToH, MulticonstraintScotch, MulticonstraintMetis, MulticonstraintParMetis}.
We believe that these \emph{constrained} search algorithms have a severe limitation;
since $d$ balance constraints could shrink the subspace of feasible solutions significantly, it becomes hard for the search to escape from local minima.
Instead, we propose to use \emph{unconstrained} search algorithms that allow temporary balance violations,
guiding the search through regions of infeasible solutions instead of around them.
% This is partially motivated by the fact that dual approaches which start from infeasible (but high-quality) solutions are a well-known and powerful tool in the wider field of constrained optimization \nikolai{elaborate}.
%This provides more options to escape from local minima by crossing the space of infeasible solutions instead of searching around it.
We already achieved improvements for traditional partitioning via unconstrained search techniques~\cite{mt-kahypar-detv2, mt-kahypar-ufm}, with similar findings by other authors~\cite{Jet-Journal, kaminpar-distributed-unconstrained}.
In this work, we will show the effectiveness of this technique in the multi-constraint setting.

However, allowing balance violations means that the algorithm must restore balance afterwards, thereby alternating between feasible and infeasible solutions.
This step is known as \emph{rebalancing} and it is a much more difficult task in the multi-constraint setting than with a single constraint~\cite{MulticonstraintMetis}.
At its core, it is equivalent to a multi-dimensional job scheduling problem---but we must simultaneously consider the optimization objective.
We therefore develop a rebalancing algorithm that efficiently and reliably restores partition balance.
The algorithm quantifies the current imbalance with a metric based on the $L_1$-norm
and greedily performs single-vertex moves that minimize the impact on quality while monotonically reducing imbalance.
In addition, we use a secondary heuristic to handle cases where the greedy approach can not make progress.
It moves a small set of vertices in a way that reduces block-internal imbalances between dimensions, thereby unlocking new move options.

\subparagraph{Contributions.}
We propose a new rebalancing algorithm for multi-constraint partitioning and integrate it into the state-of-the-art parallel hypergraph partitioner Mt-KaHyPar~\cite{mt-kahypar-journal},
while also implementing multi-constraint support for all components of Mt-KaHyPar.
We prove that our rebalancing always restores balance for $d=2$ and bounded maximum weights.
Experiments on 77 hypergraph instances and 67 graph instances show that our algorithm still finds a feasible solution with over 99\,\% reliability for $d > 2$ and larger weights,
whereas state-of-the-art competitors are below 90\,\%.
Our algorithm computes the best solution for 79\,\% of instances,
achieves an 11.5\,\% geometric mean quality improvement compared to the next best competitor,
and it is the fastest multilevel algorithm when using 16 threads.

\section{Preliminaries}\label{sec:preliminaries}

\subparagraph*{Hypergraphs and Multi-Constraint Partitioning.}
For a given dimensionality $d \in \mathbb{N}$,
let $H = (V, E, c, \omega)$ be a \emph{weighted hypergraph} with a vertex set $V$, a set of hyperedges $E \subseteq \mathcal{P}(V)$, multi-dimensional vertex weights $c\colon V \mapsto \mathbb [0, 1]^d$, and edge weights $\omega\colon E \mapsto \mathbb{R}_{> 0}$.
We extend $c$ and $\omega$ to sets, i.e., $c(V' \subseteq V) \coloneqq \sum_{v \in V'} c(v)$ and $\omega(E' \subseteq E) \coloneqq \sum_{e \in E'} \omega(e)$.
The vertices of a hyperedge $e$ are called the \emph{pins} of $e$.
%A vertex $v$ is \emph{incident} to $e$ if $v \in e$ and $I(v)$ denotes the set of incident hyperedges of $v$.
%The \emph{degree} of $v$ is $d(v) \coloneqq |I(v)|$.
A \emph{$k$-way partition} $\Pi = \{V_1, \dots, V_k\}$ of a hypergraph $H$ is a set of $k$ disjoint \emph{blocks} that partition $V$, i.e., $V = V_1 \cup \cdots \cup V_k$.
When considering a $k$-way partition $\Pi$, we always assume that the vertex weights are normalized such that $\frac{1}{k} c(V)_\dimindex = 1$ for each dimension $\dimindex \in [d]$, as this simplifies notation.
We say that $\Pi$ is \emph{$\varepsilon$-balanced} if $c(V_i)_\dimindex \le 1 + \varepsilon$ for each $i \in [k]$ and $\dimindex \in [d]$.
The \emph{connectivity} of a hyperedge $e$ is $\lambda(e) \coloneqq |\{V_i \in \Pi \mid V_i \cap e \ne \emptyset \}|$.
%The \emph{connectivity set} of a hyperedge $e$ is $\Lambda(e) \coloneqq \{V_i \mid V_i \cap e \ne \emptyset \}$ and the \emph{connectivity} of $e$ is $\lambda(e) \coloneqq |\Lambda(e)|$.
%\subparagraph*{Multi-Constraint Partitioning.}
%Given a hypergraph $H$ with $d$ weight dimensions and parameters
%Given $k$ and $\varepsilon$,
\emph{Multi-constrained balanced hypergraph partitioning} asks for an $\varepsilon$-balanced $k$-way partition of $H$ that minimizes the \emph{connectivity} $(\lambda - 1)(\Pi) \coloneqq \sum_{e \in E} (\lambda(e) - 1) \omega(e)$.
We also refer to the connectivity as the solution quality of the partition.
The problem of finding any balanced partition, ignoring quality, is equivalent to the vector scheduling problem with $k$ identical machines.
Vector scheduling admits an EPTAS for constant $d$~\cite{VectorSchedulingBounds} (with doubly-exponential dependence on $d$), but constant factor approximations for arbitrary $d$ are hard~\cite{VectorSchedulingGeneral}.

\subparagraph*{Multilevel Algorithms.}
A common approach for (hyper-)graph partitioning in practice is the multilevel scheme.
Multilevel algorithms construct a sequence of increasingly smaller hypergraphs by contracting clusters or matchings (\emph{coarsening phase}),
followed by computing an \emph{initial partition} on the smallest hypergraph.
Afterwards, the contractions are undone in reverse order while projecting the current partition to the next larger hypergraph (\emph{uncoarsening phase}).
At each step, the partition is further improved via local search algorithms (\emph{refinement}).
While approximating balanced partitioning within a constant factor is NP-hard in general~\cite{AndreevRaeckeApprox},
multilevel algorithms often find high-quality solutions in practice~\cite{HYPERGRAPH-SURVEY, mt-kahypar-journal}.

\section{Related Work}\label{sec:related}

\subparagraph*{Multilevel Partitioning.}
We refer to recent surveys~\cite{GRAPH-RECENT-SURVEY, HYPERGRAPH-SURVEY} for a broad overview on hypergraph partitioning and multilevel partitioning in particular.
Within the multilevel framework, strong refinement algorithms are crucial for improving the partition during uncoarsening and thereby obtaining a high-quality solution.
The FM algorithm by Fiduccia and Mattheyses~\cite{FM} is one of the most widely used refinement algorithms as it offers a good trade-off between quality and running time.
It greedily moves vertices prioritized by their current gain, allowing moves with negative gains in order to escape local minima (afterwards rolling back to the best observed solution).
However, recent work~\cite{Jet-Journal,mt-kahypar-detv2,mt-kahypar-ufm,kaminpar-distributed-unconstrained} demonstrated that it is beneficial to also allow temporary balance violations.
Refinement algorithms based on this principle include Jet~\cite{Jet-Journal}, as well as unconstrained variants of FM and label propagation refinement~\cite{mt-kahypar-ufm}.
The effectiveness of these approaches depends on the rebalancing algorithm used to restore partition balance, which should minimize the incurred quality penalty~\cite{mt-kahypar-ufm}.

Our implementation extends \Partitioner{Mt-KaHyPar}~\cite{mt-kahypar-journal, mt-kahypar-d} (default configuration) for multi-constraint partitioning.
\Partitioner{Mt-KaHyPar} is a scalable shared-memory hypergraph partitioner with quality comparable to the best sequential algorithms~\cite{mt-kahypar-journal},
full support for unconstrained refinement~\cite{mt-kahypar-ufm},
multiple configurations with different time to quality trade-offs~\cite{mt-kahypar-journal, mt-kahypar-flows, mt-kahypar-q},
and specialized configurations that guarantee deterministic results~\cite{mt-kahypar-sdet, mt-kahypar-detv2}.

\subparagraph*{Multi-Constraint Partitioning.}
In comparison to single-constraint partitioning, literature on the multi-constraint setting is sparse.
Metis supports multi-constraint partitioning on graph inputs~\cite{MulticonstraintMetis},
with key components including an adjusted tie breaking during coarsening (\emph{balanced-edge} heuristic)
and a multi-queue approach with $d$ priority queues for vertex moves.
During bipartitioning and refinement, this allows to select vertices whose weight is a good fit for the current partition balance.
Metis uses rebalancing if the initial solution is imbalanced, but forbids balance-violating moves once balance is achieved.
These techniques were then transferred to ParMetis~\cite{MulticonstraintParMetis, MulticonstraintParMetisDynamic}, with new contributions focusing on a scalable two-pass refinement algorithm.
The idea is to compute a set of move candidates and then select a subset that avoids (too large) balance violations.
The authors observed that small imbalances can be beneficial for solution quality~\cite{MulticonstraintParMetis}.
For the Scotch partitioner, Barat \etal\ implement multi-constraint support via ``small variations'' of the baseline algorithm~\cite{MulticonstraintScotch}.
However, the implementation is not public.

Some more recent graph partitioning algorithms operate outside of the multilevel framework.
This includes GD~\cite{GradientDescentPartitioning}, a multi-constraint partitioner based on projected gradient descent by Avdiukhin \etal, and PuLP~\cite{Pulp}, a single-level graph partitioner designed for scalability.
PuLP does not implement general multi-constraint partitioning, but it supports the special case of balancing both node weights and edge weights simultaneously.

For hypergraphs,
Karypis~\cite{MulticonstraintHMetisTR} applies ideas from multi-constraint Metis to the hMetis hypergraph partitioner.
Selvakkumaran \etal~\cite{MulticonstraintHMetisVLSI} evaluate bipartitioning for resource-aware FPGA placement, demonstrating good performance for approaches that apply postprocessing (i.e., rebalancing) to a hMetis partition.
Yet, both results are not integrated into the public hMetis package.\footnote{
    Testing this, it seems that hMetis can read a hypergraph file with multi-dimensional weights, but the partitioning algorithm only considers the balance of the first dimension.
}
Zoltan~\cite{ZOLTAN} is a hypergraph partitioner that includes multiple different partitioning approaches.
One of them is recursive coordinate bisection (RCB), a geometric mesh partitioning algorithm which supports multi-dimensional weights.
However, Zoltan has no support for multi-constraint partitioning without geometric information.
To the best of our knowledge, there are only two public implementations of general multi-constraint hypergraph partitioning.
PaToH~\cite{kPaToH, PATOH} implements multi-constraint partitioning by generalizing the standard algorithmic components,
but does not use any specialized algorithms.
%However, the authors observe that direct $k$-way partitioning is preferable to recursive bipartitioning~\cite{kPaToH}.
TritonPart~\cite{TritonPart} is designed for VLSI applications, including multi-dimensional weights.
However, the publication only evaluates single-constraint inputs, so its actual capabilities are unclear.

\section{Rebalancing for Multi-Constraint Partitioning}\label{sec:algorithm}

The primary design goals for our rebalancing algorithm are that
(i) it should restore balance with high reliability
and
(ii) it should minimize the impact on partition connectivity (so it finds connectivity improvements when combined with unconstrained refinement).
For the second goal, it is generally preferable to move only a small number of vertices.
% This makes it natural to use a local search that iteratively moves single vertices until balance is restored.
The remainder of this section therefore develops new techniques to restore balance by iteratively moving single vertices,
which becomes more difficult with increasing dimension $d$.

\begin{figure}
    \includegraphics{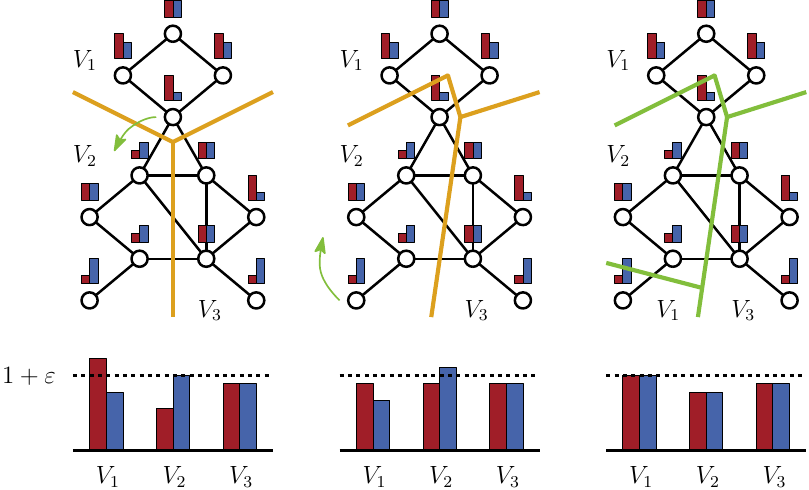}
    \caption{
        Illustration of rebalancing via the $L_1^u$ imbalance.
        Any move away from $V_1$ also overloads the target block (left).
        However, there is a sequence of two moves (applied from left to right) that monotonically decreases the $L_1^u$ imbalance (with $u = 1 + \varepsilon$) and results in a balanced partition.
    }
    \label{fig:rebalance_illustration}
\end{figure}

\subsection{A More General Imbalance Metric}
For the following discussion, remember that we assume all vertex weights are normalized such that the average block weight is 1 in each dimension (Section~\ref{sec:preliminaries}).
We further use $\delta \coloneqq \max_{v \in V, \dimindex \in [d]} c(v)_\dimindex$ to denote the maximum vertex weight.
For balanced partitioning, a common assumption is that weights are small ($\delta \le \varepsilon$)---\nolinebreak
otherwise, finding any feasible solution is already NP-hard~\cite{deep-mgp, KAHYPAR-BP}.
In the single-constraint case, this makes rebalancing a simple task; %, as it reduces to the classical parallel machine scheduling problem.
%Assuming that $\delta$ is an upper bound for the vertex weight,
a $\delta$-balanced assignment can be found by iteratively moving vertices from overloaded blocks to a block with load at most $1$.
%Therefore, engineering efforts focus on running time efficiency and quality---\nolinebreak
%the rebalancing algorithm should minimize the connectivity penalty incurred by the rebalancing.
However, this approach is already insufficient for $d = 2$.
There might be no block which simultaneously has small weight in both dimensions, and thus every possible move away from an overloaded block would overload another block (e.g., Figure~\ref{fig:rebalance_illustration}).
Instead, we propose to allow moves to overloaded blocks as long as they reduce the overall imbalance.
We formalize this with an imbalance metric based on the $L_1$ norm.

\begin{definition}\label{definition:l1_imbalance}
    For a given partition $\Pi = \{V_1, \dots, V_k\}$ of a hypergraph $H$ and an upper weight limit $u \in \mathbb{R}$,
    we define the $L_1^u$ imbalance of block $V_i$ as $L_1^u(V_i) \coloneqq \sum_{\dimindex \in [d]} \max\{c(V_i)_\dimindex - u, 0\}$ and the total $L_1^u$ imbalance as $L_1^u(\Pi) \coloneqq \sum_{i} L_1^u(V_i)$.
\end{definition}

We can observe that all blocks have maximum weight at most $u$ in all dimensions if and only if $L_1^u(\Pi) = 0$.
As illustrated in Figure~\ref{fig:rebalance_illustration},
the high-level idea of our rebalancing algorithm is to iteratively apply moves that reduce the $L_1^u$ imbalance, for an appropriately chosen value of $u$.
In the $2$-dimensional case, we can show that such a move always exists under certain conditions.

\begin{theorem}\label{theorem:2d}
    Let $d = 2$ and let $\Pi$ be a partition of a hypergraph $H$.
    If $u \ge 1 + \delta$ and there is a block $V_1 \in \Pi$ with $\norm{c(V_1)}_\infty > u + \delta$,
    then it is possible to move a vertex from $V_1$ to another block such that the $L_1^u$ imbalance is reduced.
\end{theorem}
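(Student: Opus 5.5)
The plan is a direct case analysis on the vertices of $V_1$, with a counting (averaging) argument to locate a suitable target block. Write $c_j(X)$ for $c(X)_j$. For a move of $v$ from $V_1$ to $V_i$, the change of $L_1^u(\Pi)$ splits into a decrease at $V_1$ and an increase at $V_i$, computed dimension by dimension.

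\textbf{Step 1 (per-dimension bounds).} At the source, if $c_j(V_1) > u+\delta$, then $c_j(V_1)-c_j(v) \ge c_j(V_1)-\delta > u$. So removing $v$ decreases the dimension-$j$ term of $L_1^u(V_1)$ by exactly $c_j(v)$; in every other dimension the source term does not increase. At the target, the dimension-$j$ term of $L_1^u(V_i)$ increases by at most $c_j(v)$, and by $0$ if $c_j(V_i)+c_j(v)\le u$. It increases by strictly less than $c_j(v)$ whenever $c_j(V_i)<u$ and $c_j(v)>0$, since then the increase is $\max\{c_j(V_i)+c_j(v)-u,0\} < c_j(v)$.

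\textbf{Step 2 (a light target block in each heavy dimension).} Suppose $c_j(V_1) > u+\delta \ge 1+2\delta$. Because the total weight in dimension $j$ is $k$, the other $k-1$ blocks carry total weight less than $k-1-2\delta$ in dimension $j$. Hence some block $V_i \neq V_1$ has $c_j(V_i) < 1 \le u-\delta$. Here $k\ge 2$ is automatic: if $k=1$ then $c_j(V_1)=1<u$, which contradicts the hypothesis. For such a block, adding any vertex leaves dimension $j$ at most $u$, so the target term in dimension $j$ does not change.

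\textbf{Step 3 (case analysis).} Without loss of generality, $c_1(V_1) > u+\delta$.
\begin{itemize}
\item \emph{Case (a):} some $v\in V_1$ has $c_1(v) > c_2(v)$. Move $v$ to the block $V_i$ from Step~2 with $j=1$. The source term in dimension 1 drops by $c_1(v)$, and the source term in dimension 2 does not grow. The target gains nothing in dimension 1 and at most $c_2(v) < c_1(v)$ in dimension 2. So $L_1^u$ strictly decreases.
\item \emph{Case (b):} no such $v$ exists, and some $v$ has $c_2(v) > c_1(v)$. Then $c_2(V_1) \ge c_1(V_1) > u+\delta$. Swap the roles of the two dimensions and apply Case~(a).
\item \emph{Case (c):} every $v\in V_1$ has $c_1(v)=c_2(v)$. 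Then $c_1(V_1)=c_2(V_1)>u+\delta$. Pick any $v\in V_1$ with nonzero weight; such a $v$ exists because $c_1(V_1)>0$. By Step~1, the source imbalance drops by exactly $c_1(v)+c_2(v)=2c_1(v)$. Take $V_i$ from Step~2 with $j=1$. Its dimension-1 term does not change, and its dimension-2 term grows by at most $c_2(v)=c_1(v)<2c_1(v)$. So $L_1^u$ again strictly decreases.
\end{itemize}

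\textbf{Main obstacle.} I expect the delicate point to be strictness: ruling out moves whose source decrease is exactly cancelled by the target increase. This is why I split on the sign of $c_1(v)-c_2(v)$. The tie case~(c) only works because both coordinates of $V_1$ then exceed $u+\delta$, which doubles the source decrease. The hypothesis $\norm{c(V_1)}_\infty > u+\delta$ (rather than merely $>u$) is what makes the source decrease exact in Step~1. The hypothesis $u\ge 1+\delta$ is what guarantees, via Step~2, a target block that absorbs the move in the heavy dimension for free.
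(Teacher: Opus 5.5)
Your proof is correct and follows essentially the paper's argument: fix a target block that is light in the overloaded dimension, move a vertex whose weight in that dimension dominates if one exists, and otherwise use that both coordinates of $V_1$ exceed $u+\delta$, so the decrease at the source is $c_1(v)+c_2(v)$. The only differences are minor. The paper merges your cases (b) and (c) into the single case ``$c_2(v)\ge c_1(v)$ for all $v\in V_1$'' and handles it with the doubled-decrease argument of your case (c): any $v$ with $c_1(v)>0$ gives a change of at most $-c_1(v)$, so the symmetry swap is unnecessary. Your averaging argument for the existence of a light block $V_i\neq V_1$ spells out a detail the paper leaves implicit.
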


\begin{proof}
    Without loss of generality, we assume $c(V_1)_1 > u + \delta$.
    Let $V_2$ be any block with $c(V_2)_1 \le 1$.
    First, we consider the case where $V_1$ contains a vertex $v$ with $c(v)_1 > c(v)_2$.
    Moving $v$ to $V_2$ does not overload the first dimension (due to $u \ge 1 + \delta$) and therefore increases the $L_1^u$ imbalance by at most $c(v)_2$.
    Combined with the reduced imbalance of $V_1$, the total change in imbalance is at most $c(v)_2 - c(v)_1 < 0$.
    In the case where $c(v)_2 \ge c(v)_1$ holds for all vertices $v \in V_1$, we get $c(V_1)_2 \ge c(V_1)_1 > u + \delta$ by summation.
    Let $v$ be a vertex assigned to $V_1$ with $c(v)_1 > 0$.
    Since removing $v$ from $V_1$ reduces the imbalance in both dimensions,
    the change in imbalance when moving $v$ to $V_2$ is at most $c(v)_2 - (c(v)_1 + c(v)_2) = -c(v)_1 < 0$.
\end{proof}

Using Theorem~\ref{theorem:2d}, we can construct a rebalancing algorithm that guarantees $2\delta$-balance.
Note that this is only a factor two worse than the one-dimensional bound of $\delta$.

\begin{corollary}\label{corollary:2d}
    If $d = 2$, then the following algorithm always results in a $2\delta$-balanced partition:
    While the partition is not $2\delta$-balanced, apply any move that reduces the $L_1^{1 + \delta}$ imbalance.
\end{corollary}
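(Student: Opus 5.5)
The corollary has two parts: the loop never gets stuck, and it terminates. I would set $u \coloneqq 1+\delta$ and treat them in that order.

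\textbf{The loop never gets stuck.} Suppose the partition is not yet $2\delta$-balanced. Then some block $V_1$ and dimension $\dimindex$ satisfy
\[ c(V_1)_\dimindex > 1 + 2\delta = u + \delta, \]
that is, $\norm{c(V_1)}_\infty > u+\delta$. Since $d=2$ and $u = 1+\delta \ge 1+\delta$, Theorem~\ref{theorem:2d} applies directly. It yields a move of a vertex out of $V_1$ that strictly reduces the $L_1^{u}$ imbalance. So whenever the stopping criterion fails, the algorithm can apply some move. It is worth noting that this uses the averaging fact from the theorem's proof: some block has weight at most $1$ in the overloaded dimension, because the average block weight is $1$.

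\textbf{The loop terminates.} Each iteration strictly decreases the nonnegative quantity $L_1^{u}(\Pi)$. The set of partitions is finite, with at most $k^{|V|}$ elements, so the sequence of partitions cannot repeat; a repeat would contradict strict decrease. Hence the loop ends after finitely many moves. It ends only when the while-condition fails, which means the partition is $2\delta$-balanced.

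\textbf{Main obstacle.} The point needing care is that the theorem's hypothesis, $\norm{c(V_1)}_\infty > u+\delta$, must match exactly the failure of $2\delta$-balance. With $u = 1+\delta$ we get $u+\delta = 1+2\delta$, so the negation of $2\delta$-balance gives precisely the strict inequality the theorem needs. Termination, by contrast, is routine: a finite state space together with a strictly decreasing potential suffices, and no quantitative decrease bound is required.
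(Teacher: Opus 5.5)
Your proof is correct and follows the paper's argument essentially verbatim. With $u = 1+\delta$, failure of $2\delta$-balance gives a block with $\norm{c(V_1)}_\infty > u+\delta$, so Theorem~\ref{theorem:2d} supplies an imbalance-reducing move, and termination follows because the $L_1^{1+\delta}$ imbalance strictly decreases over a finite set of configurations.
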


\begin{proof}
    The partition is not $2\delta$-balanced if there is a block $V_1$ with $\norm{c(V_1)}_\infty > 1 + 2\delta$,
    which implies with Theorem~\ref{theorem:2d} that a move decreasing the $L_1^{1 + \delta}$ imbalance always exists.
    The algorithm terminates since the sequence of imbalance values is strictly decreasing and the number of possible configurations is finite.
\end{proof}

Unfortunately, no similar guarantee holds for $d > 2$.
%Theorem~\ref{theorem:2d} essentially states that, for $d=2$ and $u \ge 1 + \delta$, it is always possible to move a vertex away from an overloaded block such that the $L_1^u$ imbalance decreases.
This is because the proof of Theorem~\ref{theorem:2d} relies on the following property:
in an overloaded block, we can find a vertex $v$ where the overloaded dimension dominates, i.e., $c(v)_1 > c(v)_2$.
Due to this, moving the vertex results in a negative imbalance delta of $c(v)_2 - c(v)_1$.
However, attempting the same for $d=3$ requires
$c(v)_1 > c(v)_2 + c(v)_3$.
It is clear we can not find a vertex with this property in the general case.
See also Appendix~\ref{sec:app_counterexample} for a counterexample that is independent of the proof.
However, we will show in Section~\ref{sec:experiments} that rebalancing with the $L_1^u$ imbalance still works well in practice.

\subsection{Greedy $L_1^u$ Rebalancing}\label{sec:l1_rebalancing}
For our full algorithm, we build on the greedy rebalancing scheme that has proven successful for single-constraint partitioning~\cite{mt-kahypar-ufm}.
Initially, greedy rebalancing selects a set of move candidates $M \subseteq V$ by collecting all vertices in overloaded blocks. % and evaluating possible target blocks.
For each vertex $v \in M$, a set of eligible target blocks $T(v) \subseteq \Pi$ is determined (in the single-constraint case, blocks that do not become overloaded when adding $v$) and the best target block is selected from $T(v)$ based on a rating function $r(v, \cdot)$.
These steps are straightforward to do in parallel.
Then, the algorithm iteratively moves the vertices in $M$, greedily selecting the vertex with the best rating in each step.
In practice, this is done with a concurrent relaxed priority queue~\cite{multiqueues-esa}, extracting the current best moves in parallel while updating the priorities of neighbors after a move is applied.
The rebalancing terminates when either balance is restored or all remaining moves in $M$ do not improve balance.

For multi-constraint rebalancing, we consider all moves that reduce the total $L_1^u$ imbalance---even if the target block is overloaded.
To define our rating function, we first need to introduce the \emph{gain} of a move.
Assuming we move a vertex $v$ to the block $V_i$,
let $\Pi'$ be the partition that results from applying this move.
We define the \emph{connectivity gain} as $\gainfn{(\lambda-1)}(v, V_i) \coloneqq (\lambda-1)(\Pi) - (\lambda-1)(\Pi')$
and the \emph{balance gain} analogously as $\gainfn{L_1^u}(v, V_i) \coloneqq L_1^u(\Pi) - L_1^u(\Pi')$.
Our rating then computes the ratio of connectivity gain to balance gain (high is better):

\[
r(v, V_i) \coloneqq \begin{cases}
    \frac{\gainfn{(\lambda-1)}(v, V_i)}{\gainfn{L_1^u}(v, V_i)}, & \text{if } \gainfn{(\lambda-1)}(v, V_i) < 0\\
    \gainfn{(\lambda-1)}(v, V_i) \cdot \gainfn{L_1^u}(v, V_i), & \text{if } \gainfn{(\lambda-1)}(v, V_i) \ge 0
\end{cases}
\]

This is effectively equivalent to the single-constraint case~\cite{mt-kahypar-ufm},
optimizing the rebalancing progress relative to the quality penalty (note that we expect a negative connectivity gain).
However, allowing moves that overload the target block also has drawbacks.
As vertices from newly overloaded blocks are not contained in $M$, the algorithm can not restore balance for these blocks.
Instead, we run up to ten repeated rounds of greedy rebalancing, stopping early if the balance is restored or a round is unable to improve the imbalance.\footnote{
    Including all vertices in $M$ is not a solution either, since changes to the partition balance require a (very expensive) global update of the target blocks---which is implicitly handled when using multiple rounds.
}

An important parameter is the choice of $u$.
A simple option is $u = 1 + \varepsilon$, but Theorem~\ref{theorem:2d} tells us to prefer $u = 1 + \varepsilon - \delta$, where $\delta$ is the maximum vertex weight.
However, real-world instances might contain some vertices with large weight relative to the average block weight.
Choosing $u$ too small because of this can cause problems, in particular since Theorem~\ref{theorem:2d} also requires that $u \ge 1 + \delta$ (and therefore $2\delta \le \varepsilon$).
Fortunately, it is often practically feasible to ignore a number of overly heavy vertices and effectively work with a smaller $\delta'$.
We use a parameterized threshold $t$ that limits the considered weight (defined as a fraction of the average block weight),
define $\delta'_\dimindex \coloneqq \min\{t, \max_v c(v)_\dimindex\}$ for $\dimindex \in [d]$
and choose per-dimension bounds
$u = (1 + \varepsilon - \delta'_1, \dots, 1 + \varepsilon - \delta'_d)^\top$;
extending Definition~\ref{definition:l1_imbalance} in the natural way to $u \in \mathbb{R}^d$.
Concretely, we found that $t=0.0025$ works well, see Section~\ref{sec:exp_ablation}.
During multilevel partitioning, rebalancing is applied at different levels of the coarsening hierarchy (Section~\ref{sec:preliminaries}).
We compute $\delta'$ separately for each level, since the maximum vertex weight might be different.

Moreover, race conditions can lead to the parallel execution of moves that worsen the imbalance in combination
(this is a non-issue for $d=1$ as overloading the target block can be avoided with atomic operations).
To prevent this, we add a rollback that restores the intermediate state with lowest imbalance after each round, based on a sequential move order.
This is equivalent to the rollback of FM refinement, but using the $L_1^u$ imbalance as objective.

\subsection{A Fallback to Fix Internal Imbalance}\label{sec:fallback}

If $d > 2$ or if heavy vertices are present, the $L_1^u$ imbalance still exhibits imbalanced local minima:
states where some blocks are overloaded, yet no single move exists that improves the $L_1^u$ imbalance.
This is usually due to \emph{internal imbalance} of blocks, i.e., one dimension has larger weight than the remaining dimensions.
With high internal imbalance, a block might be overloaded in one dimension although its weight is below average if summed over all dimensions.
We address this with a fallback heuristic designed to reduce internal imbalance,
which is triggered if a rebalancing round finds no balance-improving move.
The heuristic selects a small subset of vertices $S_i$ from each overloaded block $V_i$, which is moved to other blocks even if this worsens the current $L_1^u$ imbalance.
For this, the algorithm first chooses for each vertex $v \in V_i$ the target block $t(v)$ where moving $v$ to $t(v)$ worsens the $L_1^u$ imbalance the least.
Then, we prioritize the vertices based on the following rating (higher is better):
\[
    s(v) \coloneqq \frac{1}{\norm{c(v)}_1} \cdot \frac{c(v)_\ell}{\sum_{\dimindex \ne \ell} c(v)_\dimindex} \cdot \left( 1 + \frac{\gainfn{L_1}(v, t(v))}{\norm{c(v)}_1} \right),
\]
where $\ell = \argmax_{\dimindex \in [d]} c(V_i)_\dimindex$ is the dimension where the current block has maximum weight.
The first term penalizes heavy vertices, as it is usually sufficient to move a relatively small amount of weight.
The second term selects vertices with a high weight in the overloaded dimension, so that the move reduces the internal imbalance of $V_i$ substantially.
Finally, the third term attempts to minimize the increase in $L_1^u$ imbalance
(note that $\gainfn{L_1}(v, t(v))$ is negative and the weight sum $\norm{c(v)}_1$ is an upper bound for its absolute value).
We also consider alternative ratings in Appendix~\ref{sec:app_fallback_variants}, concluding that the exact choice of the rating function has only small influence on the overall performance of the fallback.

The vertex with the best rating is iteratively added to $S_i$ until the accumulated weight is large enough that removing $S_i$ from $V_i$ makes the block balanced.
Then, all selected vertices are moved to their target blocks
%In practice, usually only a small constant number of vertices is selected from each block.
and we apply $L_1^u$ rebalancing again.
If the partition is still imbalanced, the rebalancing stops, as it seems unlikely that balance can be restored.
Note, it is possible that this results in a state with worse imbalance than before the fallback.
We use the same rollback technique as discussed before to restore the state with lowest imbalance.

\section{Multilevel Multi-Constraint Partitioning}\label{sec:implementation}

In the following, we summarize how we integrate multi-constraint support into Mt-KaHyPar.
To keep it short,
we do not discuss algorithmic details of Mt-KaHyPar but refer the interested reader to previous publications~\cite{mt-kahypar-journal, mt-kahypar-d, mt-kahypar-ufm} for additional context.
From a high-level view, we make only minimal semantic changes to the algorithmic components other than the rebalancing;
instead we change the data structures and low-level operations that handle vertex weights.
While our description assumes normalized weights,
the actual implementation normalizes the weights on the fly whenever comparisons between dimensions are necessary.

\subparagraph*{Coarsening and Initial Partitioning.}
Mt-KaHyPar uses size-constrained label propagation clustering, where vertices are visited in random order and greedily assigned to a neighboring cluster that minimizes the heavy-edge rating function~\cite{mt-kahypar-journal}.
A cluster weight limit of $c_{\max} = \frac{c(V)}{160k}$ is enforced to prevent skewed weight distributions in the coarse hypergraph.
We simply enforce this limit separately per constraint, i.e., if a vertex $u$ wants to join a cluster $C \subseteq V$, we require that $c(C)_i + c(u)_i \le \frac{1}{160}$ for each $i \in [d]$.
%This might cause the coarsening to terminate earlier, as the multi-dimensional weight limit is more restrictive for clusters whose weight differs significantly per dimension.
Similarly, when computing an initial solution via recursive bipartitioning of the coarsest hypergraph,
we use the $d$-dimensional balance criterion but leave the portfolio of bipartitioning algorithms otherwise unchanged.
In the multi-constraint setting it can be difficult (or impossible) to find balanced bipartitions~\cite{MulticonstraintMetis};
in such cases we rely on our rebalancing algorithm to restore balance during uncoarsening.

\subparagraph*{Uncoarsening and Refinement.}
While the engineering of refinement algorithms is a topic with many facets, we want to highlight some aspects that are of particular relevance to the multi-constraint setting.
First, refinement algorithms are typically designed with the assumption that the input partition is already balanced.
Consequently, finding a balanced solution early leads to better results in general, as it provides more refinement opportunities.
We therefore apply rebalancing after initial partitioning and after each uncontraction step, if the partition is not yet balanced.
Note that uncoarsening results in a finer hypergraph with
smaller weights, possibly allowing the rebalancing to succeed even if this was impossible before.
Second, refinement algorithms based on iterative vertex moves have two options for escaping local minima:
(i) allow moves that worsen the connectivity, as is done in FM refinement, and
(ii) allow moves that cause a temporary violation of the balance constraint, applying rebalancing afterwards.
In the multi-constraint setting, moves cause a balance violation more easily, which makes it particularly useful to include option (ii).

Therefore, we adapt the unconstrained refinement algorithms of \Partitioner{Mt-KaHyPar}~\cite{mt-kahypar-ufm} to the multi-constraint setting: unconstrained label propagation and unconstrained FM refinement.
The former essentially alternates between rounds of label propagation and rebalancing, and is easily adapted by replacing the previous rebalancing with our new multi-constraint rebalancing.
However, unconstrained FM refinement incorporates two techniques that are difficult to generalize.
First, it estimates penalties for balance-violating moves via a lookup in a weight sequence.
Second, it interleaves sequences of balance-violating moves with balance-restoring moves in a reordering step, possibly discovering additional feasible solutions at intermediate points of the sequence.
Both techniques rely on the monotone behavior of a sequence of scalar weights, which has no direct multi-dimensional equivalent.
Consequently, our current implementation of unconstrained FM does not support these techniques.

\subparagraph*{Postprocessing.}
Mt-KaHyPar initially removes degree zero vertices and reinserts them after the partitioning in a postprocessing step.
In the multi-constraint setting, we need to be careful to not violate the partition balance.
We sort all removed vertices in decreasing order of their $L_1$ weight.
We then iterate over the vertices and assign each vertex $v$ to the block $V_i$ that maximizes $c(v)^\top \cdot \left( \mathbb{1}_d + \varepsilon_d - c(V_i) \right)$.
This ensures that the vertex weight is aligned with the free capacity of $V_i$ and performed well in preliminary experiments.
If necessary, we apply rebalancing afterwards.

\subsection{Scalability and Efficient Implementation}

In principle, the discussed adaptations have no impact on the scalability of Mt-KaHyPar (excluding the new rebalancing),
as we mostly replace one-dimensional operations on weights with $d$-dimensional operations.
If $d$ is a small constant, this does not affect the asymptotic behavior of the algorithm.
However, we need an efficient data layout to minimize overhead in practice.
Mt-KaHyPar stores vertex weights in an array of size $n$ as part of the basic (hyper-)graph data structure.
For $d$ constraints, we instead use an array of size $n \cdot d$ with $d$ consecutive entries for each vertex.
Since operations typically need to access all weight dimensions of a vertex simultaneously, this is a cache-efficient layout.

Many procedures need to perform calculations on vertex weights, and sometimes store the results.
A naive approach might create a new allocation for each intermediate value.
As this would cause significant overhead, we instead combine multiple techniques for a more efficient implementation.
First, we can use pointers for any weights that are already stored elsewhere.
To store results, however, creating a copy is necessary,
and storing partial results can prevent redundant recomputations.
In these cases, we use a pre-allocated copy buffer.
Thereby, only a single initial allocation is required (usually one allocation per thread), which is then reused throughout the procedure.
Furthermore, calculations often involve multiple steps with intermediate results.
For example, to calculate whether moving a vertex would overload the target block, we need the sum of the current block weight and the vertex weight.
In such cases, we do not store intermediate results at all but instead implement the calculation in a single loop, which both saves copies and improves data locality.
This optimization is called \emph{loop fusion} and it is a well-known technique to improve the efficiency of multiple consecutive matrix or vector operations~\cite{fusionViaExpressionTemplates, blisLAFusion}, making it a natural optimization for our use case.\footnote{
    While commonly used compilers such as \texttt{gcc} do not apply loop fusion automatically, we still can avoid doing it manually.
    We implement basic operations only once with a generic programming technique known as \emph{expression templates} in the C++ community~\cite{vandevoorde2002templates, veldhuizen1995expression}, which allows to express calculations naturally while emitting efficient assembly instructions as a single fused loop.
}

\subparagraph*{Atomic Operations.}
Maintaining partition weight limits
in a parallel setting involves atomic operations to synchronize changes from different threads.
Mt-KaHyPar uses \texttt{add-and-fetch} operations, optimistically adding the vertex weight to the block weight while using the return value to cancel the move if the weight limit is violated due to a race condition (restoring the weight via \texttt{sub-and-fetch}).
We use one \texttt{add-and-fetch} per dimension and, in case of constrained refinement, cancel the move if this exceeds the weight limit in the current dimension.
This guarantees that the weight limits are not exceeded, although moves might be canceled unnecessarily due to race conditions.
We do not attempt to prevent this as it should be rare in practice.

\section{Experimental Evaluation}\label{sec:experiments}

We implemented our approach within the Mt-KaHyPar framework (v1.5.3), the source code is available at \url{https://github.com/kahypar/mt-kahypar/tree/esa2026}.
In the following, we evaluate the effectiveness of our rebalancing algorithm and unconstrained refinement (Section~\ref{sec:exp_ablation}) and we compare our approach with the state of the art (Section~\ref{sec:exp_comparison}).
Finally, we show that our algorithm generalizes to larger dimensions (Section~\ref{sec:exp_d6}).

\subsection{Setup}

The code is compiled using gcc 14.2 with flags \texttt{-O3 -march=native}.
We perform all experiments on a machine with an AMD EPYC 9684X processor (one socket with 96 cores), clocked at 2.55-3.7 GHz with 1536 GB
RAM and 1152 MB L3 cache.
We run all parallel algorithms with 96 threads, except where specified otherwise.

\subparagraph*{Benchmark Sets.}
We created our benchmarks~\cite{zenodo-benchmark-data} by adopting sets from the literature and extending the (hyper-)graphs with additional weight dimensions.
Since a common application
is the simultaneous balancing of vertices and edges~\cite{Pulp}, we always use unit weights for the first dimension and the vertex degree as the weight for the second dimension.
Set V consists of 27  hypergraphs derived from VLSI circuit design; namely from the ISPD98 VLSI Circuit Benchmark Suite~\cite{ISPD98} and the DAC 2012 Routability-Driven Placement Contest~\cite{DAC} (the latter instances are significantly larger).
These hypergraphs have natural weights that correspond to the area of electrical components, which we include as a third weight dimension.
Set H comprises 50 hypergraphs from the benchmark set of Heuer and Schlag~\cite{KAHYPAR-CA} (originally 488 instances).
We selected instances with a high variance in vertex degrees, as these result in a more challenging weight distribution.
Since most competitors do not support hypergraph inputs, we also include Set I and Set R from Maas \etal{}~\cite{mt-kahypar-ufm, zenodo-ufm-benchmark}, consisting of large graph instances.
Set I contains graphs with high degree variance, while Set R contains graphs with more regular degrees.
%Note that we use only two weight dimensions for Set H, Set I and Set R (unit weights and vertex degrees).
We exclude the four largest instances from Set I (\texttt{twitter2010}, \texttt{friendster}, \texttt{sk2005}, \texttt{it2004}) since they cause overflows when using 32 bit integers to represent weights.
The benchmark sets and detailed statistics are published at \url{https://zenodo.org/records/19630135}.
See Table~\ref{tab:instances} for a summary of key metrics.

\begin{table}
    \caption{
        Overview of benchmark sets.
        We show pin counts for the smallest and largest hypergraph in the set (graph edges count as two pins),
        the number of instances excluded due to heavy vertices (determined separately for each value of $k$),
        and the instances where Corollary~\ref{corollary:2d} guarantees balance---which is the case if both $d = 2$ and $2 \delta \le \varepsilon$ hold.
    }
    \label{tab:instances}

    \begin{tabular}{llrrrrrr}
        Set & Type & \# & $d$ & Min \# Pins & Max \# Pins & Excluded & Corollary~\ref{corollary:2d} applies\\
        \midrule
        V & hypergraph & 27 & 3 & 50.6\,k & 4,9\,M & 81 (43\,\%) & - \\
        H & hypergraph & 50 & 2 & 3.5\,k & 55.8\,M & 40 (11\,\%) & 101 (28.9\,\%) \\
        I & graph & 34 & 2 & 10.7\,M & 1.7\,B & 6 (3\,\%) & 147 (61.8\,\%) \\
        R & graph & 33 & 2 & 25.4\,M & 1.1\,B & - & 231 (100\,\%) \\
    \end{tabular}
\end{table}

\begin{figure}
    \begin{minipage}{0.49\textwidth}
        \hspace{-8pt}
    \ifpdfplots
    \includegraphics{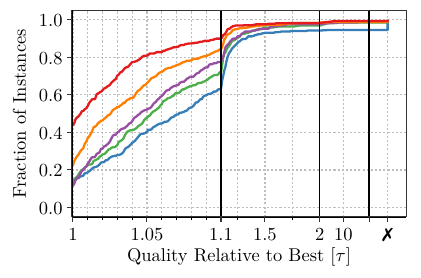}
    \else
    \tikzsetnextfilename{ablation_quality}%
    \input{figures/ablation_quality}%
    \fi

    \end{minipage}
    \hfill
    \begin{minipage}{0.465\textwidth}
        \begin{tabular}{lrrr}
            Algorithm & Conn. & Time & Balanced\\
            \midrule
            Baseline & 1.0 & 1.0 & 94.50\,\% \\
            $L_1^u$ imbalance & 0.996 & 1.118 & 98.56\,\% \\
            +\,rollback & 0.993 & 1.145 & 98.80\,\% \\
            +\,weight red. & 0.964 & 1.099 & 98.80\,\% \\
            +\,fallback & 0.953 & 1.129 & 99.28\,\% \\
        \end{tabular}
    \end{minipage}
    
    \vspace{-5pt}
    \begin{minipage}{\textwidth}
        \hspace{-10pt}
    \ifpdfplots
    \includegraphics{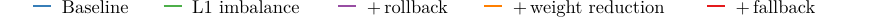}
    \else
    \tikzsetnextfilename{ablation_quality_legend}%
    % Created by tikzDevice version 0.12.6 on 2026-03-24 14:38:33
% !TEX encoding = UTF-8 Unicode
\begin{tikzpicture}[x=1pt,y=1pt]
\definecolor{fillColor}{RGB}{255,255,255}
\path[use as bounding box,fill=fillColor,fill opacity=0.00] (0,0) rectangle (419.17, 10.84);
\begin{scope}
\path[clip] (  0.00,  0.00) rectangle (419.17, 10.84);
\definecolor{drawColor}{RGB}{55,126,184}

\path[draw=drawColor,line width= 1.1pt,line join=round] ( 16.10,  7.92) -- ( 24.78,  7.92);
\end{scope}
\begin{scope}
\path[clip] (  0.00,  0.00) rectangle (419.17, 10.84);
\definecolor{drawColor}{RGB}{77,175,74}

\path[draw=drawColor,line width= 1.1pt,line join=round] ( 78.81,  7.92) -- ( 87.49,  7.92);
\end{scope}
\begin{scope}
\path[clip] (  0.00,  0.00) rectangle (419.17, 10.84);
\definecolor{drawColor}{RGB}{152,78,163}

\path[draw=drawColor,line width= 1.1pt,line join=round] (162.98,  7.92) -- (171.65,  7.92);
\end{scope}
\begin{scope}
\path[clip] (  0.00,  0.00) rectangle (419.17, 10.84);
\definecolor{drawColor}{RGB}{255,127,0}

\path[draw=drawColor,line width= 1.1pt,line join=round] (233.30,  7.92) -- (241.97,  7.92);
\end{scope}
\begin{scope}
\path[clip] (  0.00,  0.00) rectangle (419.17, 10.84);
\definecolor{drawColor}{RGB}{228,26,28}

\path[draw=drawColor,line width= 1.1pt,line join=round] (340.72,  7.92) -- (349.39,  7.92);
\end{scope}
\begin{scope}
\path[clip] (  0.00,  0.00) rectangle (419.17, 10.84);
\definecolor{drawColor}{RGB}{0,0,0}

\node[text=drawColor,anchor=base west,inner sep=0pt, outer sep=0pt, scale=  0.89] at ( 29.86,  4.57) {Baseline};
\end{scope}
\begin{scope}
\path[clip] (  0.00,  0.00) rectangle (419.17, 10.84);
\definecolor{drawColor}{RGB}{0,0,0}

\node[text=drawColor,anchor=base west,inner sep=0pt, outer sep=0pt, scale=  0.89] at ( 92.57,  4.57) {L1 imbalance};
\end{scope}
\begin{scope}
\path[clip] (  0.00,  0.00) rectangle (419.17, 10.84);
\definecolor{drawColor}{RGB}{0,0,0}

\node[text=drawColor,anchor=base west,inner sep=0pt, outer sep=0pt, scale=  0.89] at (176.73,  4.57) {+\,rollback};
\end{scope}
\begin{scope}
\path[clip] (  0.00,  0.00) rectangle (419.17, 10.84);
\definecolor{drawColor}{RGB}{0,0,0}

\node[text=drawColor,anchor=base west,inner sep=0pt, outer sep=0pt, scale=  0.89] at (247.05,  4.57) {+\,weight reduction};
\end{scope}
\begin{scope}
\path[clip] (  0.00,  0.00) rectangle (419.17, 10.84);
\definecolor{drawColor}{RGB}{0,0,0}

\node[text=drawColor,anchor=base west,inner sep=0pt, outer sep=0pt, scale=  0.89] at (354.47,  4.57) {+\,fallback};
\end{scope}
\end{tikzpicture}%
    \fi

    \end{minipage}
    
    \vspace{-8pt}
    \caption{
        Ablation study of the components of our rebalancing algorithm on Set~V and Set~H.
        We compare the resulting connectivity (left) and list geometric means of the connectivity and total running time relative to the baseline, as well as the fraction of balanced results (right).
    }
    \label{fig:ablation}
\end{figure}

\subparagraph*{Methodology.}
In the following evaluation, we allow an imbalance of $\varepsilon = 0.03$;
results for different values are shown in Appendix~\ref{sec:app_other_epsilon}.
We use $k \in \{2, 5, 8, 11, 16, 27, 32\}$ and 5 random seeds for each instance (i.e., combination of hypergraph and $k$),
aggregating connectivity and running time with the arithmetic mean over the seeds.
We consider the result imbalanced only if none of the seeds yielded a balanced solution.
When aggregating over multiple instances, we use the geometric mean.
Reported running times exclude file I/O and parsing.

We use performance profiles~\cite{PERFORMANCE-PROFILES} to compare the solution quality of different algorithms.
A performance profile plots for each algorithm $A$ the fraction of instances where $A$ is within a factor $\tau$ of the best found solution,
i.e., $\frac{1}{|\mathcal{I}|} |\{ q_A(I) \le \tau \cdot \min_{A' \in \mathcal{A}} q_{A'}(I) \mid I \in \mathcal{I} \}|$,
where $\mathcal{I}$ is the set of instances, $\mathcal{A}$ is the set of considered algorithms and $q_A$ the solution quality (connectivity) of algorithm $A$ on a given instance.
We mark infeasible results with \ding{55}.

Depending on the distribution of vertex weights, instances might not admit a balanced solution.
E.g., an instance is trivially infeasible if a singular vertex exceeds the maximum allowed block weight---\nolinebreak
which becomes more likely for larger values of $k$.
To avoid this, we exclude instances with any vertex weight above 70\% of the average block weight.
Note that this is much more relaxed than Corollary~\ref{corollary:2d} (which requires $\delta \le \frac{1}{2}\varepsilon = 1.5\,\%$), leading to the inclusion of many instances where we can not provide a theoretical guarantee (see Table~\ref{tab:instances}).

\subsection{Impact of Rebalancing and Refinement}\label{sec:exp_ablation}

We evaluate the effectiveness of our rebalancing algorithm in an ablation study,
presented in Figure~\ref{fig:ablation}.
We incrementally add the components proposed in Section~\ref{sec:algorithm}, % to the rebalancing,
while all other parts of the algorithm use our final configuration.
The baseline is a multi-constraint version of the standard rebalancing algorithm of \Partitioner{Mt-KaHyPar}, which allows moves only if the target block does not become overloaded.
Using the $L_1^u$ imbalance metric (with $u = 1 + \varepsilon$) reduces the fraction of instances with no balanced solution by a factor of four (1.4\,\% instead of 5.5\,\%),
demonstrating its effectiveness for multi-constraint partitioning.
However, we pay a 12\,\% geometric mean running time overhead since multiple rebalancing rounds can now be necessary.
Adding a rollback mechanism further increases the reliability of finding balanced solutions
and also provides a minor improvement in solution quality.
As suggested by Theorem~\ref{theorem:2d}, the imbalance metric works even better with a slightly reduced target block weight $u$.
Setting a threshold of $t=0.0025$ (see Section~\ref{sec:l1_rebalancing}), this provides a substantial improvement both in solution quality (3\,\%) and running time (4.5\,\%).
%likely because the adjusted metric provides more flexibility to move vertices away from barely overloaded blocks, which leads to faster convergence.
We found in preliminary experiments that the exact choice of $t$ is mostly inconsequential, as long as $t$ is significantly smaller than $\varepsilon$.
Finally, our fallback algorithm further increases the reliability to over 99\,\% and also improves solution quality (1\,\%) at a moderate cost in running time (3\,\%).
%Finally, our fallback algorithm allows to find a balanced solution in even more cases, increasing the reliability to over 99\,\% and further improving solution quality (1\,\%) at a moderate cost in running time (3\,\%).

In Figure~\ref{fig:refinement}, we compare the performance of constrained and unconstrained refinement (see Section~\ref{sec:implementation}), specifically label propagation (LP) and FM as used by default \Partitioner{Mt-KaHyPar}.
We also include hybrid constrained/unconstrained configurations.
The result clearly demonstrates the importance of unconstrained refinement for high-quality partitioning.
All configurations with unconstrained refinement achieve substantially lower connectivity values compared to $-$uLP/$-$uFM (7-10\,\% in the geometric mean), though at the cost of increased running time.
Comparing the unconstrained algorithms, LP performs better than FM, and is already close to the best quality when combined with standard FM---\nolinebreak
unconstrained FM adds only a marginal improvement at a notable running time cost.
This is likely because two techniques are missing % in our FM implementation
compared to the single-constrained case: imbalance penalties and move sequence reordering  (see Section~\ref{sec:implementation}).
For our final configuration, we use the hybrid variant with unconstrained LP refinement.

\begin{figure}
    \begin{minipage}{0.49\textwidth}
        \hspace{-8pt}
    \ifpdfplots
    \includegraphics{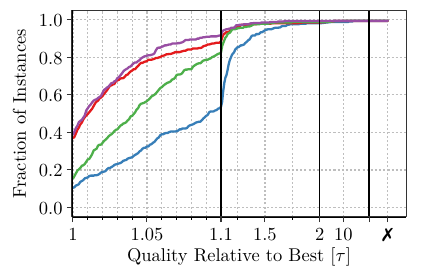}
    \else
    \tikzsetnextfilename{refinement_quality}%
    \input{figures/refinement_quality}%
    \fi

    \end{minipage}
    \hfill
    \begin{minipage}{0.465\textwidth}
        \begin{tabular}{lrrr}
            Algorithm & Conn. & Time & Balanced\\
            \midrule
            $-$uLP/$-$uFM & 1.0 & 1.0 & 99.28\,\% \\
            $+$uLP/$-$uFM & 0.910 & 1.179 & 99.28\,\% \\
            $-$uLP/$+$uFM & 0.933 & 1.115 & 99.52\,\% \\
            $+$uLP/$+$uFM & 0.902 & 1.280 & 99.52\,\% \\
        \end{tabular}
    \end{minipage}
    
    \vspace{-5pt}
    \begin{minipage}{\textwidth}
        \hspace{-10pt}
    \ifpdfplots
    \includegraphics{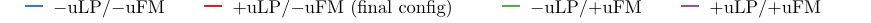}
    \else
    \tikzsetnextfilename{refinement_quality_legend}%
    % Created by tikzDevice version 0.12.6 on 2026-03-24 14:38:46
% !TEX encoding = UTF-8 Unicode
\begin{tikzpicture}[x=1pt,y=1pt]
\definecolor{fillColor}{RGB}{255,255,255}
\path[use as bounding box,fill=fillColor,fill opacity=0.00] (0,0) rectangle (419.17, 10.84);
\begin{scope}
\path[clip] (  0.00,  0.00) rectangle (419.17, 10.84);
\definecolor{drawColor}{RGB}{55,126,184}

\path[draw=drawColor,line width= 1.1pt,line join=round] ( 11.89,  7.92) -- ( 20.57,  7.92);
\end{scope}
\begin{scope}
\path[clip] (  0.00,  0.00) rectangle (419.17, 10.84);
\definecolor{drawColor}{RGB}{228,26,28}

\path[draw=drawColor,line width= 1.1pt,line join=round] ( 98.28,  7.92) -- (106.95,  7.92);
\end{scope}
\begin{scope}
\path[clip] (  0.00,  0.00) rectangle (419.17, 10.84);
\definecolor{drawColor}{RGB}{77,175,74}

\path[draw=drawColor,line width= 1.1pt,line join=round] (241.67,  7.92) -- (250.34,  7.92);
\end{scope}
\begin{scope}
\path[clip] (  0.00,  0.00) rectangle (419.17, 10.84);
\definecolor{drawColor}{RGB}{152,78,163}

\path[draw=drawColor,line width= 1.1pt,line join=round] (328.06,  7.92) -- (336.73,  7.92);
\end{scope}
\begin{scope}
\path[clip] (  0.00,  0.00) rectangle (419.17, 10.84);
\definecolor{drawColor}{RGB}{0,0,0}

\node[text=drawColor,anchor=base west,inner sep=0pt, outer sep=0pt, scale=  0.89] at ( 25.65,  4.57) {$-$uLP/$-$uFM};
\end{scope}
\begin{scope}
\path[clip] (  0.00,  0.00) rectangle (419.17, 10.84);
\definecolor{drawColor}{RGB}{0,0,0}

\node[text=drawColor,anchor=base west,inner sep=0pt, outer sep=0pt, scale=  0.89] at (112.04,  4.57) {$+$uLP/$-$uFM (final config)};
\end{scope}
\begin{scope}
\path[clip] (  0.00,  0.00) rectangle (419.17, 10.84);
\definecolor{drawColor}{RGB}{0,0,0}

\node[text=drawColor,anchor=base west,inner sep=0pt, outer sep=0pt, scale=  0.89] at (255.43,  4.57) {$-$uLP/$+$uFM};
\end{scope}
\begin{scope}
\path[clip] (  0.00,  0.00) rectangle (419.17, 10.84);
\definecolor{drawColor}{RGB}{0,0,0}

\node[text=drawColor,anchor=base west,inner sep=0pt, outer sep=0pt, scale=  0.89] at (341.81,  4.57) {$+$uLP/$+$uFM};
\end{scope}
\end{tikzpicture}%
    \fi

    \end{minipage}
    
    \vspace{-8pt}
    \caption{
        Comparing refinement algorithms on Set~V and Set~H.
        We use either constrained or unconstrained label propagation (uLP) and FM refinement (uFM), resulting in 4 configurations.
        We compare the resulting connectivity (left) and list geometric means of the connectivity and total running time relative to the constrained baseline, as well as the fraction of balanced results (right).
    }
    \label{fig:refinement}
\end{figure}

\begin{figure}
    \begin{minipage}{0.49\textwidth}
        \centering\large
        Set~I
    \end{minipage}
    \hfill
    \begin{minipage}{0.49\textwidth}
        \centering\large
        Set R
    \end{minipage}
    \begin{minipage}{0.49\textwidth}
        \hspace{-8pt}
    \ifpdfplots
    \includegraphics{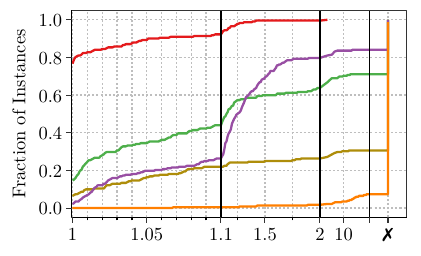}
    \else
    \tikzsetnextfilename{final_quality_ufm_irregular}%
    \input{figures/final_quality_ufm_irregular}%
    \fi

    \end{minipage}
    \hfill
    \begin{minipage}{0.49\textwidth}
        \hspace{-4pt}
    \ifpdfplots
    \includegraphics{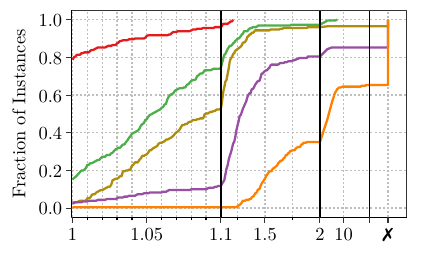}
    \else
    \tikzsetnextfilename{final_quality_ufm_regular}%
    \input{figures/final_quality_ufm_regular}%
    \fi

    \end{minipage}

    \vspace{-12pt}
    \begin{minipage}{0.49\textwidth}
        \centering\large
        Set~H
    \end{minipage}
    \hfill
    \begin{minipage}{0.49\textwidth}
        \centering\large
        Set V
    \end{minipage}
    \begin{minipage}{0.49\textwidth}
        \hspace{-8pt}
    \ifpdfplots
    \includegraphics{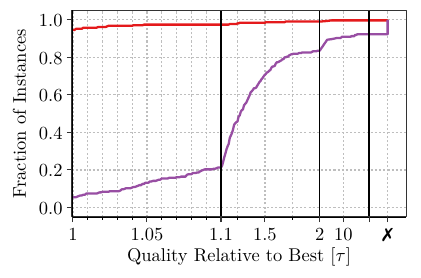}
    \else
    \tikzsetnextfilename{vs_patoh_quality_hg}%
    \input{figures/vs_patoh_quality_hg}%
    \fi

    \end{minipage}
    \hfill
    \begin{minipage}{0.49\textwidth}
        \hspace{-4pt}
    \ifpdfplots
    \includegraphics{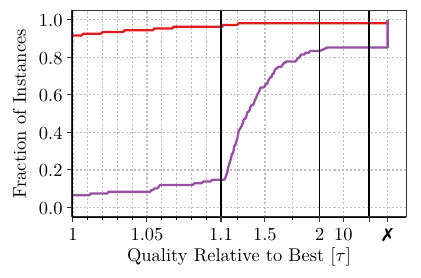}
    \else
    \tikzsetnextfilename{vs_patoh_quality_vlsi}%
    \input{figures/vs_patoh_quality_vlsi}%
    \fi

    \end{minipage}

    \vspace{-10pt}
    \begin{minipage}{\textwidth}
    \ifpdfplots
    \includegraphics{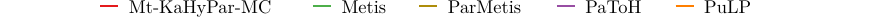}
    \else
    \tikzsetnextfilename{final_quality_legend}%
    % Created by tikzDevice version 0.12.6 on 2026-03-24 14:41:40
% !TEX encoding = UTF-8 Unicode
\begin{tikzpicture}[x=1pt,y=1pt]
\definecolor{fillColor}{RGB}{255,255,255}
\path[use as bounding box,fill=fillColor,fill opacity=0.00] (0,0) rectangle (419.17, 10.84);
\begin{scope}
\path[clip] (  0.00,  0.00) rectangle (419.17, 10.84);
\definecolor{drawColor}{RGB}{228,26,28}

\path[draw=drawColor,line width= 1.1pt,line join=round] ( 48.36,  7.92) -- ( 57.04,  7.92);
\end{scope}
\begin{scope}
\path[clip] (  0.00,  0.00) rectangle (419.17, 10.84);
\definecolor{drawColor}{RGB}{77,175,74}

\path[draw=drawColor,line width= 1.1pt,line join=round] (150.79,  7.92) -- (159.46,  7.92);
\end{scope}
\begin{scope}
\path[clip] (  0.00,  0.00) rectangle (419.17, 10.84);
\definecolor{drawColor}{RGB}{174,141,10}

\path[draw=drawColor,line width= 1.1pt,line join=round] (202.07,  7.92) -- (210.74,  7.92);
\end{scope}
\begin{scope}
\path[clip] (  0.00,  0.00) rectangle (419.17, 10.84);
\definecolor{drawColor}{RGB}{152,78,163}

\path[draw=drawColor,line width= 1.1pt,line join=round] (268.27,  7.92) -- (276.94,  7.92);
\end{scope}
\begin{scope}
\path[clip] (  0.00,  0.00) rectangle (419.17, 10.84);
\definecolor{drawColor}{RGB}{255,127,0}

\path[draw=drawColor,line width= 1.1pt,line join=round] (325.54,  7.92) -- (334.21,  7.92);
\end{scope}
\begin{scope}
\path[clip] (  0.00,  0.00) rectangle (419.17, 10.84);
\definecolor{drawColor}{RGB}{0,0,0}

\node[text=drawColor,anchor=base west,inner sep=0pt, outer sep=0pt, scale=  0.89] at ( 62.12,  4.57) {Mt-KaHyPar-MC};
\end{scope}
\begin{scope}
\path[clip] (  0.00,  0.00) rectangle (419.17, 10.84);
\definecolor{drawColor}{RGB}{0,0,0}

\node[text=drawColor,anchor=base west,inner sep=0pt, outer sep=0pt, scale=  0.89] at (164.54,  4.57) {Metis};
\end{scope}
\begin{scope}
\path[clip] (  0.00,  0.00) rectangle (419.17, 10.84);
\definecolor{drawColor}{RGB}{0,0,0}

\node[text=drawColor,anchor=base west,inner sep=0pt, outer sep=0pt, scale=  0.89] at (215.82,  4.57) {ParMetis};
\end{scope}
\begin{scope}
\path[clip] (  0.00,  0.00) rectangle (419.17, 10.84);
\definecolor{drawColor}{RGB}{0,0,0}

\node[text=drawColor,anchor=base west,inner sep=0pt, outer sep=0pt, scale=  0.89] at (282.02,  4.57) {PaToH};
\end{scope}
\begin{scope}
\path[clip] (  0.00,  0.00) rectangle (419.17, 10.84);
\definecolor{drawColor}{RGB}{0,0,0}

\node[text=drawColor,anchor=base west,inner sep=0pt, outer sep=0pt, scale=  0.89] at (339.30,  4.57) {PuLP};
\end{scope}
\end{tikzpicture}%
    \fi

    \end{minipage}

    \vspace{-15pt}
    \caption{
        Comparing our solution quality to state-of-the-art competitors on graph benchmark sets (top) and hypergraph benchmark sets (bottom).
        Infeasible results or crashes are marked with \ding{55}.
    }
    \label{fig:final}
\end{figure}

\begin{figure}
    \ifpdfplots
    \includegraphics{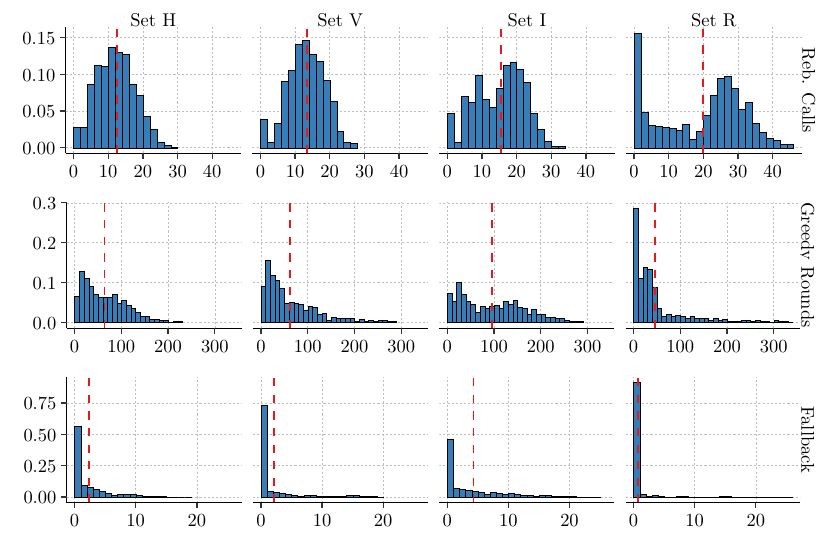}
    \else
    \tikzsetnextfilename{histograms}%
    \input{figures/histograms}%
    \fi

    \vspace{-15pt}
    \caption{
        Histograms of rebalancing calls (top), greedy rebalancing rounds (middle) and fallback invocations during rebalancing (bottom).
        The y-axis depicts the fraction of runs (each seed constitutes a separate data point) and the x-axis the total count per run.
        The mean is marked with a red line.
    }
    \label{fig:histograms}
\end{figure}

\subsection{Comparison to the State of the Art}\label{sec:exp_comparison}

We compare our algorithm (denoted as Mt-KaHyPar-MC), to Metis~\cite{MulticonstraintMetis} (v5.2.1), ParMetis~\cite{MulticonstraintParMetis} (v4.0.3), PaToH~\cite{PATOH} (v3.2) and PuLP~\cite{Pulp} (v0.2), which is a mostly comprehensive list of publicly available partitioners with (partial) multi-constraint support.
For Scotch~\cite{MulticonstraintScotch}, hMetis~\cite{MulticonstraintHMetisTR} and GD~\cite{GradientDescentPartitioning}, we are not aware of a public implementation.
We exclude Zoltan~\cite{ZOLTAN}, as it is limited to geometric partitioning, and TritonPart~\cite{TritonPart}, as the paper provides no data for the claimed multi-constraint support.\footnote{
    The implementation also has non-trivial dependencies and generally a complicated setup process.
}
We evaluated both the direct $k$-way and recursive bisection mode for Metis, and the default and quality configuration for PaToH.
We only show direct $k$-way Metis and default PaToH, as these outperform the alternative configurations (see Appendix~\ref{sec:app_competitors}).
%Since we found that direct $k$-way METIS and default PaToH, respectively, consistently outperform the alternative configurations (see Appendix~\ref{sec:app_competitors}), we only show the former variants.
We use the recommended default settings for all remaining partitioners.%\footnote{
%Note, PaToH supports multi-constraint partitioning only through its C interface, for which we wrote a wrapper that parses the input and calls PaToH. Other partitioners are called via their CLI interface.
%}

\subparagraph{Partition Balance.}
Figure~\ref{fig:final} (top) shows the solution quality and fraction of feasible solutions on Set~I and Set~R.
Mt-KaHyPar-MC always computes a feasible solution, while the competitors have lower reliability, in particular on Set~I (see Table~\ref{tab:time} for exact values).
Specifically, PuLP finds a feasible solution for less than 9\,\% of inputs on Set~I,
likely because it is designed for more relaxed edge balance; the original paper uses 0.5~\cite{Pulp}.
In Appendix~\ref{sec:app_other_epsilon}, we show that the results are similar for both smaller and larger imbalance ($\varepsilon \in \{0.01, 0.1\})$.
Note that the infeasible results include a few crashes in the case of PuLP, PaToH and ParMetis.\footnote{
    PuLP produced a segmentation fault in 30 out of 35 runs on \Instance{rmat-n16m24},
    PaToH had allocation errors on \Instance{uk2005}, \Instance{webbase2001} and \Instance{mawi2015},
    ParMetis crashed on \Instance{uk2005}, \Instance{webbase2001}, \Instance{mawi2015} and \Instance{sinaweibo} and exceeded the time limit on \Instance{rmat-n25m28}.
}

We provide statistics about the behavior of our rebalancing in Figure~\ref{fig:histograms}.
On average, rebalancing the partition once requires between 2.5 (Set~R) and 6 (Set~I) greedy rounds.
On Set~R, rebalancing is used more often (on average 20$\times$ per run), but it tends to succeed with few rounds and without our secondary fallback heuristic.
In general, on more than half of the inputs the fallback is never used, though there is a tail of instances with different behavior.

\subparagraph{Partition Connectivity.}
As shown in Figure~\ref{fig:final} (top),
Mt-KaHyPar-MC finds the best solution of all considered algorithms on 79\,\% of instances for both Set~I and Set~R.
The difference to the next competitor is larger on Set~I (geometric mean: $1.16\times$ versus Metis, $1.27\times$ versus PaToH) than on Set~R ($1.09\times$ versus Metis).
This is consistent with previous results for unconstrained refinement on these instances~\cite{mt-kahypar-ufm}.
Metis tends to find the best results among the competitors, in particular on Set~R.
On Set~I, PaToH finds more balanced solutions than Metis though often with worse connectivity.
We also compare our algorithm to PaToH (the only competitor that supports hypergraph inputs) on Set~H and Set~V in Figure~\ref{fig:final} (bottom), with similar findings:
Mt-KaHyPar-MC achieves significantly better solution quality and more often finds a balanced solution,
even though most instances are outside the guarantees from Corollary~\ref{corollary:2d}.
%Note that differences of few percent are considered significant in the context of (hyper-)graph partitioning, which makes this a substantial improvement.

\begin{table}
    \caption{
        Geometric mean running times on Set~I and Set~R, excluding instances where any algorithm crashed (40 out of 239 on Set~I, none on Set~R), and fraction of balanced results (excluding crashes).
        We also show the running time of our algorithm with 1 and 16 instead of 96 threads.
    }
    \label{tab:time}

    \begin{tabular}{lrrrr}
        Algorithm & Time (Set~I) & Balanced (Set~I) & Time (Set~R) & Balanced (Set~R) \\
        \midrule
        Mt-KaHyPar-MC & 3.50\,s & 100.00\,\% & 1.47\,s & 100.00\,\%\\
        \textit{with 16 threads} & 7.22\,s &  & 3.17\,s & \\
        \textit{with 1 thread} & 61.34\,s & & 27.94\,s & \\
        \midrule
        Metis & 15.60\,s & 71.12\,\% & 7.21\,s & 100.00\,\%\\
        ParMetis & 13.32\,s & 34.98\,\% & 4.40\,s & 96.54\,\%\\
        PaToH & 86.46\,s & 89.86\,\% & 70.38\,s & 85.28\,\%\\
        PuLP & 0.69\,s & 8.77\,\% & 0.50\,s & 65.37\,\%\\
    \end{tabular}
\end{table}

\begin{figure}
    \ifpdfplots
    \includegraphics{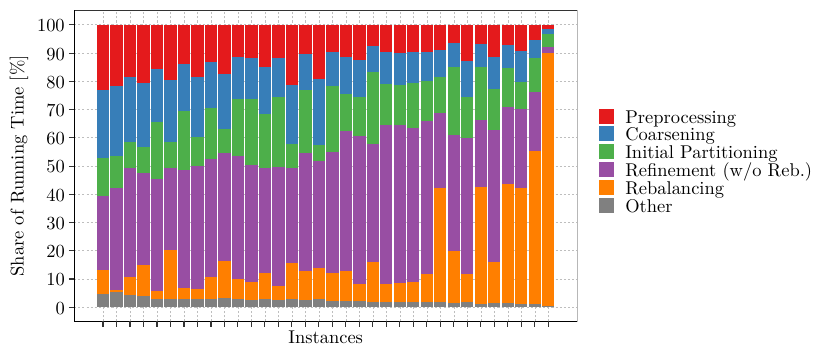}
    \else
    \tikzsetnextfilename{running_time_share_t96}%
    \input{figures/running_time_share_t96}%
    \fi

    \vspace{-10pt}
    \caption{
        Running time of components, as fraction of total time.
        Each bar corresponds to a single instance (hypergraph and $k$) from the larger instances of Set~V, in ascending order of total time.
    }
    \label{fig:time_shares}
\end{figure}

\subparagraph{Running Time.}
While running time is not the primary focus of this work, we aim to preserve the scalability of default Mt-KaHyPar.
The geometric mean running times presented in Table~\ref{tab:time} show that Mt-KaHyPar-MC (with all 96 threads) is the second fastest algorithm after PuLP.
This is unsurprising, as PuLP uses single-level partitioning, trading solution quality for a lower running time.
Compared to all multilevel competitors, i.e., ParMetis with 96 threads, Metis, and PaToH (both single-threaded), our algorithm is still faster when using 16 threads.
However, we also note that using 96 threads only leads to a $18.4\times$ speedup (solution quality is mostly unaffected by the number of threads; see Appendix~\ref{sec:app_threads}),
whereas default Mt-KaHyPar achieves a speedup of $20.5\times$ for 64 threads~\cite{mt-kahypar-journal}.
Figure~\ref{fig:time_shares} provides insight into the running time of different parts of the algorithm (see also Appendix~\ref{sec:app_running_time_shares}).
We can observe a high variance in rebalancing time:
while it is usually below 20\,\% of total time, one instance is dominated by rebalancing.
On this instance, our algorithm could not find a balanced solution,
so the long time is likely due to many failed rebalancing attempts.
This indicates opportunities for future work to reduce the overhead of failed rebalancing attempts and improve the scalability.

\subsection{Results for 6 Dimensions}\label{sec:exp_d6}

\begin{table}
    \caption{Relative increase in geometric mean connectivity and running time on Set~V when adding 3 dimensions with uniformly random weights (Set~V$'$).}
    \label{tab:dimension_6}
    
    \begin{tabular}{l|rr|rr}
        & \multicolumn{2}{c|}{$\varepsilon = 0.03$} & \multicolumn{2}{c}{$\varepsilon = 0.01$}\\
        Algorithm & Connectivity & Running Time & Connectivity & Running Time \\
        \midrule
        Mt-KaHyPar-MC & 1.4\,\% & 4.6\,\% & 10.9\,\% & 9.4\,\% \\
        PaToH & 7.1\,\% & 10.9\,\% & 16.1\,\% & 8.7\,\%
    \end{tabular}
\end{table}

\begin{figure}
    \vspace{5pt}
    \begin{minipage}{0.49\textwidth}
        \centering\large
        $\varepsilon = 0.03$
    \end{minipage}
    \hfill
    \begin{minipage}{0.49\textwidth}
        \centering\large
        $\varepsilon = 0.01$
    \end{minipage}

    \begin{minipage}{0.49\textwidth}
        \hspace{-8pt}
    \ifpdfplots
    \includegraphics{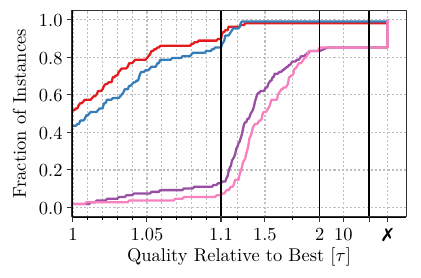}
    \else
    \tikzsetnextfilename{6_dimensions_quality}%
    \input{figures/6_dimensions_quality}%
    \fi

    \end{minipage}
    \hfill
    \begin{minipage}{0.49\textwidth}
    \ifpdfplots
    \includegraphics{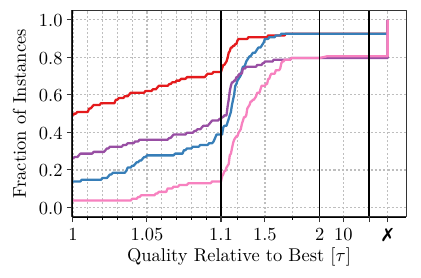}
    \else
    \tikzsetnextfilename{6_dimensions_e1_quality}%
    \input{figures/6_dimensions_e1_quality}%
    \fi

    \end{minipage}
    
    \vspace{-5pt}
    \ifpdfplots
    \includegraphics{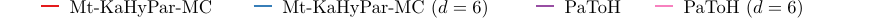}
    \else
    \tikzsetnextfilename{6_dimensions_quality_legend}%
    % Created by tikzDevice version 0.12.6 on 2026-04-02 13:23:04
% !TEX encoding = UTF-8 Unicode
\begin{tikzpicture}[x=1pt,y=1pt]
\definecolor{fillColor}{RGB}{255,255,255}
\path[use as bounding box,fill=fillColor,fill opacity=0.00] (0,0) rectangle (419.17, 10.84);
\begin{scope}
\path[clip] (  0.00,  0.00) rectangle (419.17, 10.84);
\definecolor{drawColor}{RGB}{228,26,28}

\path[draw=drawColor,line width= 1.1pt,line join=round] ( 19.81,  7.92) -- ( 28.48,  7.92);
\end{scope}
\begin{scope}
\path[clip] (  0.00,  0.00) rectangle (419.17, 10.84);
\definecolor{drawColor}{RGB}{55,126,184}

\path[draw=drawColor,line width= 1.1pt,line join=round] (122.23,  7.92) -- (130.90,  7.92);
\end{scope}
\begin{scope}
\path[clip] (  0.00,  0.00) rectangle (419.17, 10.84);
\definecolor{drawColor}{RGB}{152,78,163}

\path[draw=drawColor,line width= 1.1pt,line join=round] (258.32,  7.92) -- (266.99,  7.92);
\end{scope}
\begin{scope}
\path[clip] (  0.00,  0.00) rectangle (419.17, 10.84);
\definecolor{drawColor}{RGB}{247,129,191}

\path[draw=drawColor,line width= 1.1pt,line join=round] (315.59,  7.92) -- (324.26,  7.92);
\end{scope}
\begin{scope}
\path[clip] (  0.00,  0.00) rectangle (419.17, 10.84);
\definecolor{drawColor}{RGB}{0,0,0}

\node[text=drawColor,anchor=base west,inner sep=0pt, outer sep=0pt, scale=  0.89] at ( 33.56,  4.57) {Mt-KaHyPar-MC};
\end{scope}
\begin{scope}
\path[clip] (  0.00,  0.00) rectangle (419.17, 10.84);
\definecolor{drawColor}{RGB}{0,0,0}

\node[text=drawColor,anchor=base west,inner sep=0pt, outer sep=0pt, scale=  0.89] at (135.99,  4.57) {Mt-KaHyPar-MC ($d=6$)};
\end{scope}
\begin{scope}
\path[clip] (  0.00,  0.00) rectangle (419.17, 10.84);
\definecolor{drawColor}{RGB}{0,0,0}

\node[text=drawColor,anchor=base west,inner sep=0pt, outer sep=0pt, scale=  0.89] at (272.08,  4.57) {PaToH};
\end{scope}
\begin{scope}
\path[clip] (  0.00,  0.00) rectangle (419.17, 10.84);
\definecolor{drawColor}{RGB}{0,0,0}

\node[text=drawColor,anchor=base west,inner sep=0pt, outer sep=0pt, scale=  0.89] at (329.35,  4.57) {PaToH ($d=6$)};
\end{scope}
\end{tikzpicture}%
    \fi

    \vspace{-5pt}
    \caption{
        Comparing solution quality on Set~V ($d = 3$) to the solution quality on Set~V$'$ ($d = 6$), using $\varepsilon = 0.03$ (left) and $\varepsilon = 0.01$ (right).
    }
    \label{fig:dimension_6}
\end{figure}

In order to evaluate the reliability of our algorithm for a higher number of dimensions, we additionally created a benchmark set with $d=6$.
Set~V$'$ contains the hypergraphs from Set~V,\nolinebreak
\footnote{We used Set~V since it already has 3 weight dimensions, while our remaining benchmark sets have only 2 weight dimensions---Set~V should therefore result in harder instances.}
however we added 3 new weight dimensions where each dimensions contains weights from a uniform distribution (integers between 1 and 100; note that the range has almost no impact on the difficulty of the instances).
The results are presented in Figure~\ref{fig:dimension_6} and Table~\ref{tab:dimension_6}.
Note that both Mt-KaHyPar-MC and PaToH still find balanced solutions for the same fraction of instances as for $d = 3$, even when using the stricter imbalance value $\varepsilon = 0.01$.
For $\varepsilon = 0.03$, Mt-KaHyPar-MC achieves almost the same solution quality as for Set~V (only 1.4\,\% difference).
We note that uniform random weights result in distributions that are relatively simple to solve;
more adversarial weight distributions could still decrease the reliability of our algorithm.

\section{Conclusion}

We achieve a substantial improvement in quality and reliability over the current state of the art in multi-constraint hypergraph partitioning,
using unconstrained local search algorithms that escape local optima by allowing temporary balance violations.
This is enabled by a new multi-constraint rebalancing algorithm that restores balance with high reliability, providing guarantees for $d=2$ and bounded maximum weight.
Yet, there is room for further improvements.
A multi-constraint variant of the sequence reordering technique for unconstrained FM refinement should be possible, although it might have weaker guarantees.
How to generalize penalties for balance-violating moves is less clear and might require a new kind of approximation.
Furthermore, we plan to investigate the feasibility of strong balance guarantees for $d>2$ and to minimize the running time overhead of rebalancing.

\nikolai{mention that additional changes to coarsening + IP + relaxed refinement constraint are known?}
\nikolai{another possible experiment (future work): single-constraint partitioning with rebalancing as postprocessing}
\review{In some of the applications cited for the multi-constraint use cases, the weights can have different significance and importance. In such a case would it make sense to have different balance parameters for different weights? -> unlikely: this breaks Theorem~\ref{theorem:2d} since we don't find vertices with a fitting weight distribution}
\review{1. Choice of the threshold t: Theorem 2 suggests (...), but practical large weights force a threshold t to effectively reduce $\delta$. How sensitive is the performance to t=0.0025? Would an adaptive scheme that sets t based on the weight distribution be feasible and beneficial?}
\review{3. Connection to vector scheduling: The rebalancing problem without quality consideration is equivalent to vector scheduling. Given that the algorithm provides a guarantee for d=2 (Corollary 3), can this be seen as a practical approximation algorithm for vector scheduling with an additional objective? Clarifying the relationship would be of interest to a broader theory audience.}
\review{The experiments in Appendix F show that the exact choice of rating function has negligible impact. This suggests that the benefit of the fallback comes more from the act of moving a few vertices rather than the specific selection criteria. Could the heuristic be simplified further, e.g., by randomly selecting heavy vertices from the overloaded dimension? Discussing the design implications would be useful.}

%%
%% Bibliography
%%

\newpage
\bibliography{references}

\vfill
%\newpage
\appendix

\section{Counterexample for the Generalization of Theorem~\ref{theorem:2d} to $d > 2$}\label{sec:app_counterexample}

Theorem~\ref{theorem:2d} essentially states that, for $d=2$ and $u \ge 1 + \delta$, it is always possible to move a vertex away from an overloaded block such that the $L_1^u$ imbalance decreases.
In the following, we show that this property of the $L_1^u$ imbalance does not transfer to $d \ge 3$.
For this, we construct an example for $d=3$ where no vertex can be moved in a way that decreases the $L_1^u$ imbalance.
Assume that the goal is to find a $\tau$-balanced partition\footnote{
    Theorem~\ref{theorem:2d} chooses $\tau = u - 1 + \delta$. However, we keep the counterexample more general since the choice might be different for larger $d$.
} using some value of $u$ where $1 \le u \le 1+\tau$.
We use $2n + 1$ blocks that are divided in three sets:
$\{V_1\}, \mathcal{V}_A$ and $\mathcal{V}_B$, where $|\mathcal{V}_A| = |\mathcal{V}_B| = n$.
We choose the following weights for some small $\varepsilon > 0$:
%$c(V_1) = (\tau + \varepsilon, 1, 1)$, $c(V_i) = (u, 2 - u, 2 - u)$ for $i \in \{2, \dots, n + 1\}$ and $c(V_i) = (2 - u + \frac{1}{n}(1 - \tau - \varepsilon), u, u)$ for $i \in \{n + 2, \dots, 2n + 1\}$.
\[
c(V_i) = \begin{cases}
    \left( 1 + \tau + \varepsilon, 1, 1 \right) & \text{if } i=1\\
    \left( 1 + \tau, 1 - \tau, 1 - \tau \right) & \text{if } V_i \in \mathcal{V}_A\\
    \left( 1 - \tau - \frac{1}{n}(\tau + \varepsilon), 1 + \tau, 1 + \tau \right) & \text{if } V_i \in \mathcal{V}_B
\end{cases}
\]

We require $\tau \le 1 - \frac{1}{n + 1} - \varepsilon$.
This ensures that all block weights are non-negative (intermediate step: $(n + 1) \tau + \varepsilon \le n$), making this is a feasible choice of weights.
We can also verify that the summed weight of each dimension is $2n + 1$, i.e., the average weight of the blocks is normalized to 1.
With regards to the individual vertex weights,
we choose them such that $c(v)_1 < c(v)_2 + c(v)_3$ for every $v \in V_1$ (possible due to $c(V_1)_1 = 1 + \tau + \varepsilon < 2 = c(V_1)_2 + c(V_1)_3$).

Regarding the imbalance,
$V_1$ is overloaded by $\varepsilon$ while all other blocks are $\tau$-balanced.
Let us consider whether we can move a vertex $v \in V_1$ while decreasing the $L_1^u$ imbalance.
Moving $v$ to a block in $\mathcal{V}_A$ reduces the $L_1^u$ imbalance of $V_1$ by at most $c(v)_1$, but increases the $L_1^u$ imbalance of the target block by the same amount.
Since the remaining dimensions are below $u$, the overall $L_1^u$ imbalance remains unchanged (or increases).
If we move $v$ to a block in $\mathcal{V}_B$ instead, the change in $L_1^u$ imbalance is at least $c(v)_2 + c(v)_3 - c(v)_1$, which would increase the $L_1^u$ imbalance.

Overall, this excludes analogous statements to Theorem~\ref{theorem:2d}
(decreasing the $L_1^u$  imbalance by moving a single vertex)
for a large range of parameters.
The counterexample works for $\tau$ arbitrarily close to 1, providing a lower bound of twice the average block weight---which constitutes too much imbalance for most practical applications.
Consequently, similar guarantees for $d > 2$ would at least require a different imbalance definition than the $L_1^u$ imbalance.

\newpage
\section{Results for $\varepsilon = 0.01$ and $\varepsilon = 0.1$}\label{sec:app_other_epsilon}

\begin{figure}[h]
    \begin{minipage}{0.49\textwidth}
        \hspace{-8pt}
    \ifpdfplots
    \includegraphics{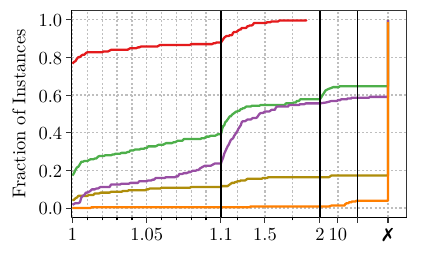}
    \else
    \tikzsetnextfilename{final_e1_quality_ufm_irregular}%
    \input{figures/final_e1_quality_ufm_irregular}%
    \fi

    \end{minipage}
    \hfill
    \begin{minipage}{0.49\textwidth}
    \ifpdfplots
    \includegraphics{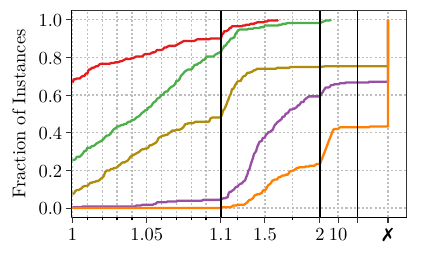}
    \else
    \tikzsetnextfilename{final_e1_quality_ufm_regular}%
    \input{figures/final_e1_quality_ufm_regular}%
    \fi

    \end{minipage}

    \vspace{-8pt}
    \begin{minipage}{0.49\textwidth}
        \hspace{-8pt}
    \ifpdfplots
    \includegraphics{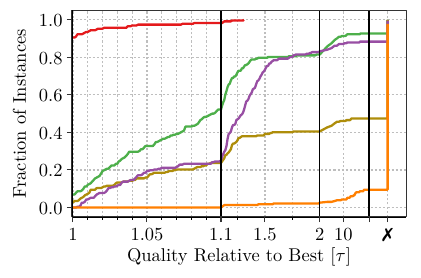}
    \else
    \tikzsetnextfilename{final_e10_quality_ufm_irregular}%
    \input{figures/final_e10_quality_ufm_irregular}%
    \fi

    \end{minipage}
    \hfill
    \begin{minipage}{0.49\textwidth}
    \ifpdfplots
    \includegraphics{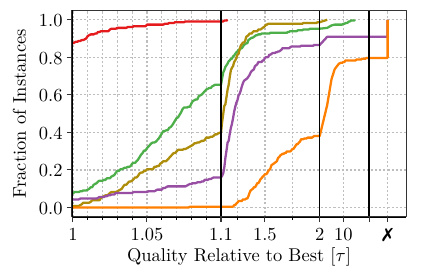}
    \else
    \tikzsetnextfilename{final_e10_quality_ufm_regular}%
    \input{figures/final_e10_quality_ufm_regular}%
    \fi

    \end{minipage}

    \vspace{-5pt}
    \begin{minipage}{\textwidth}
    \ifpdfplots
    \includegraphics{plots/final_quality_legend.pdf}
    \else
    \tikzsetnextfilename{final_quality_legend}%
    % Created by tikzDevice version 0.12.6 on 2026-03-24 14:41:40
% !TEX encoding = UTF-8 Unicode
\begin{tikzpicture}[x=1pt,y=1pt]
\definecolor{fillColor}{RGB}{255,255,255}
\path[use as bounding box,fill=fillColor,fill opacity=0.00] (0,0) rectangle (419.17, 10.84);
\begin{scope}
\path[clip] (  0.00,  0.00) rectangle (419.17, 10.84);
\definecolor{drawColor}{RGB}{228,26,28}

\path[draw=drawColor,line width= 1.1pt,line join=round] ( 48.36,  7.92) -- ( 57.04,  7.92);
\end{scope}
\begin{scope}
\path[clip] (  0.00,  0.00) rectangle (419.17, 10.84);
\definecolor{drawColor}{RGB}{77,175,74}

\path[draw=drawColor,line width= 1.1pt,line join=round] (150.79,  7.92) -- (159.46,  7.92);
\end{scope}
\begin{scope}
\path[clip] (  0.00,  0.00) rectangle (419.17, 10.84);
\definecolor{drawColor}{RGB}{174,141,10}

\path[draw=drawColor,line width= 1.1pt,line join=round] (202.07,  7.92) -- (210.74,  7.92);
\end{scope}
\begin{scope}
\path[clip] (  0.00,  0.00) rectangle (419.17, 10.84);
\definecolor{drawColor}{RGB}{152,78,163}

\path[draw=drawColor,line width= 1.1pt,line join=round] (268.27,  7.92) -- (276.94,  7.92);
\end{scope}
\begin{scope}
\path[clip] (  0.00,  0.00) rectangle (419.17, 10.84);
\definecolor{drawColor}{RGB}{255,127,0}

\path[draw=drawColor,line width= 1.1pt,line join=round] (325.54,  7.92) -- (334.21,  7.92);
\end{scope}
\begin{scope}
\path[clip] (  0.00,  0.00) rectangle (419.17, 10.84);
\definecolor{drawColor}{RGB}{0,0,0}

\node[text=drawColor,anchor=base west,inner sep=0pt, outer sep=0pt, scale=  0.89] at ( 62.12,  4.57) {Mt-KaHyPar-MC};
\end{scope}
\begin{scope}
\path[clip] (  0.00,  0.00) rectangle (419.17, 10.84);
\definecolor{drawColor}{RGB}{0,0,0}

\node[text=drawColor,anchor=base west,inner sep=0pt, outer sep=0pt, scale=  0.89] at (164.54,  4.57) {Metis};
\end{scope}
\begin{scope}
\path[clip] (  0.00,  0.00) rectangle (419.17, 10.84);
\definecolor{drawColor}{RGB}{0,0,0}

\node[text=drawColor,anchor=base west,inner sep=0pt, outer sep=0pt, scale=  0.89] at (215.82,  4.57) {ParMetis};
\end{scope}
\begin{scope}
\path[clip] (  0.00,  0.00) rectangle (419.17, 10.84);
\definecolor{drawColor}{RGB}{0,0,0}

\node[text=drawColor,anchor=base west,inner sep=0pt, outer sep=0pt, scale=  0.89] at (282.02,  4.57) {PaToH};
\end{scope}
\begin{scope}
\path[clip] (  0.00,  0.00) rectangle (419.17, 10.84);
\definecolor{drawColor}{RGB}{0,0,0}

\node[text=drawColor,anchor=base west,inner sep=0pt, outer sep=0pt, scale=  0.89] at (339.30,  4.57) {PuLP};
\end{scope}
\end{tikzpicture}%
    \fi

    \end{minipage}
    
    \vspace{-5pt}
    \caption{
        Comparing our solution quality to state-of-the-art algorithms on Set~I (left) and Set~R (right) for $\varepsilon = 0.01$ (top) and $\varepsilon = 0.1$ (bottom).
        Infeasible solutions or crashes are marked with \ding{55}.
    }
    \label{fig:final_ufm_e1}
\end{figure}

\begin{table}[h!]
    \caption{
        Geometric mean running times on Set~I and Set~R for $\varepsilon = 0.01$ (top) and $\varepsilon = 0.1$ (bottom), excluding instances where any algorithm crashed, as well as fraction of balanced partitions (excluding crashes).
    }
    
    \begin{tabular}{lrrrr}
        Algorithm ($\varepsilon = 0.01$) & Time (Set~I) & Balanced (Set~I) & Time (Set~R) & Balanced (Set~R) \\
        \midrule
        Mt-KaHyPar-MC & 3.41\,s & 100.00\,\% & 1.54\,s & 100.00\,\%\\
        \midrule
        Metis & 15.30\,s & 64.66\,\% & 7.37\,s & 100.00\,\%\\
        ParMetis & 12.14\,s & 20.41\,\% & 4.40\,s & 75.32\,\%\\
        PaToH & 83.40\,s & 63.13\,\% & 69.92\,s & 67.10\,\%\\
        PuLP & 0.64\,s & 5.26\,\% & 0.50\,s & 43.29\,\%\\
    \end{tabular}
    \vspace{\baselineskip}

    \begin{tabular}{lrrrr}
        Algorithm ($\varepsilon = 0.1$) & Time (Set~I) & Balanced (Set~I) & Time (Set~R) & Balanced (Set~R) \\
        \midrule
        Mt-KaHyPar-MC & 3.41\,s & 100.00\,\% & 1.42\,s & 100.00\,\%\\
        \midrule
        Metis & 14.33\,s & 92.67\,\% & 7.07\,s & 100.00\,\%\\
        ParMetis & 13.31\,s & 54.19\,\% & 4.40\,s & 100.00\,\%\\
        PaToH & 82.78\,s & 94.04\,\% & 66.37\,s & 90.91\,\%\\
        PuLP & 0.71\,s & 11.74\,\% & 0.45\,s & 79.65\,\%\\
    \end{tabular}
\end{table}

\newpage

\section{Comparing Metis and PaToH Configurations}\label{sec:app_competitors}

\begin{figure}[h]
    \begin{minipage}{0.49\textwidth}
        \hspace{-8pt}
    \ifpdfplots
    \includegraphics{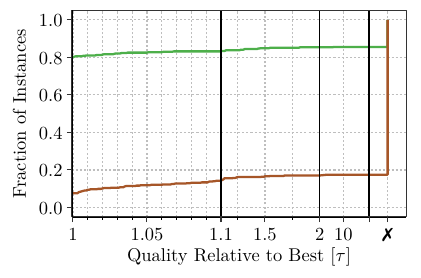}
    \else
    \tikzsetnextfilename{metis_variants_quality}%
    \input{figures/metis_variants_quality}%
    \fi

    \end{minipage}
    \hfill
    \begin{minipage}{0.49\textwidth}
    \ifpdfplots
    \includegraphics{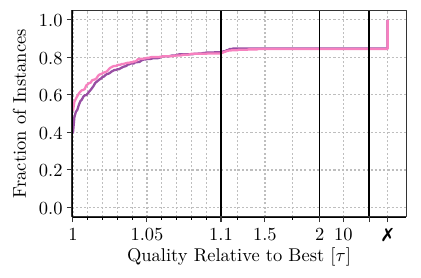}
    \else
    \tikzsetnextfilename{patoh_variants_quality}%
    \input{figures/patoh_variants_quality}%
    \fi

    \end{minipage}
    
    \vspace{-5pt}
    \begin{minipage}{0.49\textwidth}
        \hspace{-8pt}
    \ifpdfplots
    \includegraphics{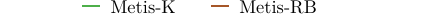}
    \else
    \tikzsetnextfilename{metis_variants_quality_legend}%
    % Created by tikzDevice version 0.12.6 on 2026-03-24 14:33:28
% !TEX encoding = UTF-8 Unicode
\begin{tikzpicture}[x=1pt,y=1pt]
\definecolor{fillColor}{RGB}{255,255,255}
\path[use as bounding box,fill=fillColor,fill opacity=0.00] (0,0) rectangle (202.36, 10.84);
\begin{scope}
\path[clip] (  0.00,  0.00) rectangle (202.36, 10.84);
\definecolor{drawColor}{RGB}{77,175,74}

\path[draw=drawColor,line width= 1.1pt,line join=round] ( 39.51,  7.92) -- ( 48.18,  7.92);
\end{scope}
\begin{scope}
\path[clip] (  0.00,  0.00) rectangle (202.36, 10.84);
\definecolor{drawColor}{RGB}{166,86,40}

\path[draw=drawColor,line width= 1.1pt,line join=round] (101.54,  7.92) -- (110.21,  7.92);
\end{scope}
\begin{scope}
\path[clip] (  0.00,  0.00) rectangle (202.36, 10.84);
\definecolor{drawColor}{RGB}{0,0,0}

\node[text=drawColor,anchor=base west,inner sep=0pt, outer sep=0pt, scale=  0.89] at ( 53.26,  4.57) {Metis-K};
\end{scope}
\begin{scope}
\path[clip] (  0.00,  0.00) rectangle (202.36, 10.84);
\definecolor{drawColor}{RGB}{0,0,0}

\node[text=drawColor,anchor=base west,inner sep=0pt, outer sep=0pt, scale=  0.89] at (115.29,  4.57) {Metis-RB};
\end{scope}
\end{tikzpicture}%
    \fi

    \end{minipage}
    \hfill
    \begin{minipage}{0.49\textwidth}
    \ifpdfplots
    \includegraphics{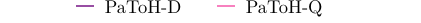}
    \else
    \tikzsetnextfilename{patoh_variants_quality_legend}%
    % Created by tikzDevice version 0.12.6 on 2026-03-24 14:33:24
% !TEX encoding = UTF-8 Unicode
\begin{tikzpicture}[x=1pt,y=1pt]
\definecolor{fillColor}{RGB}{255,255,255}
\path[use as bounding box,fill=fillColor,fill opacity=0.00] (0,0) rectangle (202.36, 10.84);
\begin{scope}
\path[clip] (  0.00,  0.00) rectangle (202.36, 10.84);
\definecolor{drawColor}{RGB}{152,78,163}

\path[draw=drawColor,line width= 1.1pt,line join=round] ( 36.81,  7.92) -- ( 45.48,  7.92);
\end{scope}
\begin{scope}
\path[clip] (  0.00,  0.00) rectangle (202.36, 10.84);
\definecolor{drawColor}{RGB}{247,129,191}

\path[draw=drawColor,line width= 1.1pt,line join=round] (104.69,  7.92) -- (113.37,  7.92);
\end{scope}
\begin{scope}
\path[clip] (  0.00,  0.00) rectangle (202.36, 10.84);
\definecolor{drawColor}{RGB}{0,0,0}

\node[text=drawColor,anchor=base west,inner sep=0pt, outer sep=0pt, scale=  0.89] at ( 50.56,  4.57) {PaToH-D};
\end{scope}
\begin{scope}
\path[clip] (  0.00,  0.00) rectangle (202.36, 10.84);
\definecolor{drawColor}{RGB}{0,0,0}

\node[text=drawColor,anchor=base west,inner sep=0pt, outer sep=0pt, scale=  0.89] at (118.45,  4.57) {PaToH-Q};
\end{scope}
\end{tikzpicture}%
    \fi

    \end{minipage}
    
    \caption{
        Comparing different configurations of Metis (left) and PaToH (right) on the combined instances of Set~I and Set~R for $\varepsilon = 0.03$.
    }
    \label{fig:competitor_variants}
\end{figure}

We compare different configurations of Metis and PaToH in Figure~\ref{fig:competitor_variants}.
Metis has difficulty producing balanced solutions when using recursive bipartitioning, while also being $1.4\times$ slower than the direct $k$-way configuration in the geometric mean.
We note that recursive bipartitioning makes it inherently difficult to find balanced solutions, since each step only considers two blocks at once instead of the global partition balance.
The default and quality configuration of PaToH produce results with nearly identical solution quality.
However, the quality configuration has a $1.19\times$ running time overhead compared to the default configuration.

\section{Quality Impact of Number of Threads}\label{sec:app_threads}

\begin{figure}[h]
    \begin{minipage}{0.49\textwidth}
        \hspace{-8pt}
    \ifpdfplots
    \includegraphics{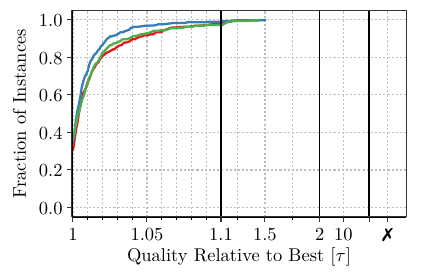}
    \else
    \tikzsetnextfilename{num_threads_quality}%
    \input{figures/num_threads_quality}%
    \fi

    \end{minipage}
    
    \vspace{-5pt}
    \begin{minipage}{0.49\textwidth}
        \hspace{-8pt}
    \ifpdfplots
    \includegraphics{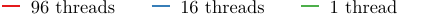}
    \else
    \tikzsetnextfilename{num_threads_quality_legend}%
    % Created by tikzDevice version 0.12.6 on 2026-03-25 14:59:44
% !TEX encoding = UTF-8 Unicode
\begin{tikzpicture}[x=1pt,y=1pt]
\definecolor{fillColor}{RGB}{255,255,255}
\path[use as bounding box,fill=fillColor,fill opacity=0.00] (0,0) rectangle (202.36, 10.84);
\begin{scope}
\path[clip] (  0.00,  0.00) rectangle (202.36, 10.84);
\definecolor{drawColor}{RGB}{228,26,28}

\path[draw=drawColor,line width= 1.1pt,line join=round] (  1.11,  7.92) -- (  9.78,  7.92);
\end{scope}
\begin{scope}
\path[clip] (  0.00,  0.00) rectangle (202.36, 10.84);
\definecolor{drawColor}{RGB}{55,126,184}

\path[draw=drawColor,line width= 1.1pt,line join=round] ( 73.10,  7.92) -- ( 81.77,  7.92);
\end{scope}
\begin{scope}
\path[clip] (  0.00,  0.00) rectangle (202.36, 10.84);
\definecolor{drawColor}{RGB}{77,175,74}

\path[draw=drawColor,line width= 1.1pt,line join=round] (145.08,  7.92) -- (153.76,  7.92);
\end{scope}
\begin{scope}
\path[clip] (  0.00,  0.00) rectangle (202.36, 10.84);
\definecolor{drawColor}{RGB}{0,0,0}

\node[text=drawColor,anchor=base west,inner sep=0pt, outer sep=0pt, scale=  0.89] at ( 14.87,  4.57) {96 threads};
\end{scope}
\begin{scope}
\path[clip] (  0.00,  0.00) rectangle (202.36, 10.84);
\definecolor{drawColor}{RGB}{0,0,0}

\node[text=drawColor,anchor=base west,inner sep=0pt, outer sep=0pt, scale=  0.89] at ( 86.85,  4.57) {16 threads};
\end{scope}
\begin{scope}
\path[clip] (  0.00,  0.00) rectangle (202.36, 10.84);
\definecolor{drawColor}{RGB}{0,0,0}

\node[text=drawColor,anchor=base west,inner sep=0pt, outer sep=0pt, scale=  0.89] at (158.84,  4.57) {1 thread};
\end{scope}
\end{tikzpicture}%
    \fi

    \end{minipage}

    \caption{
        Comparing partition quality for different numbers of threads on the combined instances of Set~I and Set~R for $\varepsilon = 0.03$.
        %Using 16 threads produced slightly better quality than either 96 threads or 1 thread (0.5\,\% in the geometric mean).
    }
\end{figure}

\section{Comparison of Fallback Heuristics}\label{sec:app_fallback_variants}

In Section~\ref{sec:fallback}, we describe a heuristic to resolve cases where the greedy $L_1^u$ rebalancing can not make further progress.
The heuristic selects a small subset of vertices from each overloaded block, which is moved to another block even if this worsens the $L_1^u$ imbalance.
In addition to the rating function $s$ presented in Section~\ref{sec:fallback}, we looked at multiple alternative variants.
%To choose this subset, we evaluated different options for a heuristic rating function, where the best performing variant is presented in Section~\ref{sec:fallback}.
For this, let us consider a generalized rating function that rates a combination of moved vertex $v$ and target block $V_t$.
\[
    s'(v, V_t) \coloneqq \rho(v) \cdot \alpha(v) \cdot \beta(v, V_t)
\]

Here, $\rho$ represent a weight penalty term, $\alpha$ represents a rating based on the current block $V_i$ (i.e., does the internal imbalance decrease)
and $\beta$ represents a rating based on the target block $V_t$ (i.e., how is the internal imbalance and/or $L_1^u$ imbalance of the target block affected).
For these components, we considered the following variants:
\begin{align*}
    \rho_1(v) &= \frac{1}{\norm{c(v)}_1} &
    \rho_2(v) &= 1 &
    & \\
    \alpha_1(v) &= \frac{c(v)_\ell}{\sum_{\dimindex \ne \ell} c(v)_\dimindex} &
    \alpha_2(v) &= c(v)^\top c(V_i) &
    & \\
    \beta_1(v, V_t) &= \frac{\sum_{\dimindex \ne \ell} c(V_t)_\dimindex}{c(V_t)_\ell} &
    \beta_2(v, V_t) &= c(v)^\top (\mathbb{1}_d - c(V_t)) &
    \beta_3(v, V_t) &= 1 + \frac{\gainfn{L_1}(v, V_t)}{\norm{c(v)}_1},
\end{align*}
where $\ell = \argmax_{\dimindex \in [d]} c(V_s)_\dimindex$ is the dimension where $V_i$ has maximum weight.
$\rho_2$ is a variant without penalty for heavy nodes.
$\alpha_1$ and $\beta_1$ prioritize moves where the most overweight dimension of $V_i$ is aligned with the maximum weight dimension of $v$ (see $\alpha_1$) and with the free capacity of $V_t$ (see $\beta_1$).
The idea of $\alpha_2$ and $\beta_2$ is similar, but here we use a dot product to determine whether the weights are aligned.
$\beta_3$ is a bit different, trying to minimize the $L_1^u$ imbalance penalty instead.

The results are presented in Table~\ref{tab:fallback_params}.
Note that the rating function from Section~\ref{sec:fallback} corresponds to $\rho_1$/$\alpha_1$/$\beta_3$.
Overall, the differences are too small to draw any clear conclusion (we observed similar differences for two runs of the same configuration in preliminary experiments).
Therefore, we keep our initial choice of $\rho_1$/$\alpha_1$/$\beta_3$ for the final configuration.
On the other hand, this indicates that the fallback heuristic itself is robust with regards to the exact choice of the rating function.

\begin{table}
    \caption{
        Comparing rating functions for our fallback heuristic on Set~V and Set~H.
        We list geometric means of the connectivity and total running time relative to $\rho_1$/$\alpha_1$/$\beta_1$, as well as the fraction of balanced results.
    }
    \label{tab:fallback_params}

    \begin{minipage}{0.49\textwidth}
        \begin{tabular}{lrrr}
            Algorithm & Conn. & Time & Balanced\\
            \midrule
            $\rho_1$/$\alpha_1$/$\beta_1$ & 1.000 & 1.000 & 99.28\,\% \\
            $\rho_1$/$\alpha_1$/$\beta_2$ & 0.995 & 1.000 & 99.52\,\% \\
            $\rho_1$/$\alpha_1$/$\beta_3$ & 0.999 & 0.990 & 99.28\,\% \\
            $\rho_1$/$\alpha_2$/$\beta_1$ & 1.012 & 0.998 & 99.28\,\% \\
            $\rho_1$/$\alpha_2$/$\beta_2$ & 0.998 & 0.994 & 99.28\,\% \\
            $\rho_1$/$\alpha_2$/$\beta_3$ & 1.001 & 1.000 & 99.52\,\% \\
        \end{tabular}
    \end{minipage}
    \hfill
    \begin{minipage}{0.49\textwidth}
        \begin{tabular}{lrrr}
            Algorithm & Conn. & Time & Balanced\\
            \midrule
            $\rho_2$/$\alpha_1$/$\beta_1$ & 1.013 & 1.001 & 99.52\,\% \\
            $\rho_2$/$\alpha_1$/$\beta_2$ & 1.001 & 1.004 & 99.28\,\% \\
            $\rho_2$/$\alpha_1$/$\beta_3$ & 0.996 & 0.999 & 99.28\,\% \\
            $\rho_2$/$\alpha_2$/$\beta_1$ & 0.998 & 1.002 & 99.28\,\% \\
            $\rho_2$/$\alpha_2$/$\beta_2$ & 0.998 & 0.999 & 99.04\,\% \\
            $\rho_2$/$\alpha_2$/$\beta_3$ & 1.001 & 0.996 & 99.28\,\% \\
        \end{tabular}
    \end{minipage}
\end{table}

\section{Running Time of Algorithm Components}\label{sec:app_running_time_shares}

Figure~\ref{fig:time_shares_t1} shows running time shares of the components of our algorithm in the single-threaded case.
Compared to Figure~\ref{fig:time_shares}, outliers where rebalancing dominates running time are less significant.
This indicates that the rebalancing currently has worse scaling behavior than the remaining algorithm.
Figure~\ref{fig:time_shares_i} and Figure~\ref{fig:time_shares_r} show running time shares on Set~I and Set~R.

\begin{figure}[h]
    \ifpdfplots
    \includegraphics{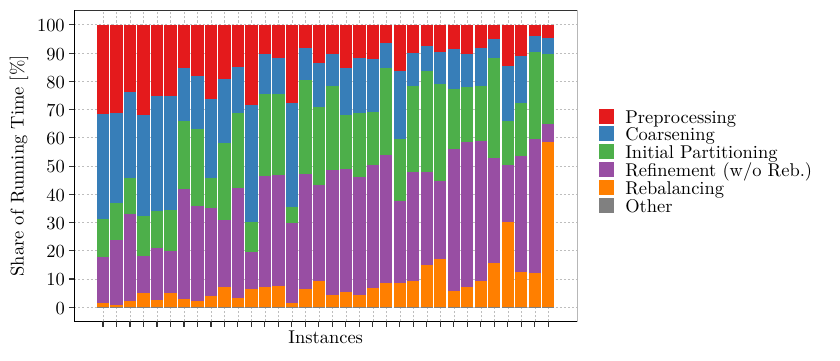}
    \else
    \tikzsetnextfilename{running_time_share_t1}%
    \input{figures/running_time_share_t1}%
    \fi

    \vspace{-10pt}
    \caption{
        Running time of different components of our full algorithm when using a single thread, as fraction of total time.
        Each bar corresponds to a single instance (hypergraph and $k$) from the large instances of Set~V, in ascending order of total running time.
    }
    \label{fig:time_shares_t1}
\end{figure}

\begin{figure}[h]
    \ifpdfplots
    \includegraphics{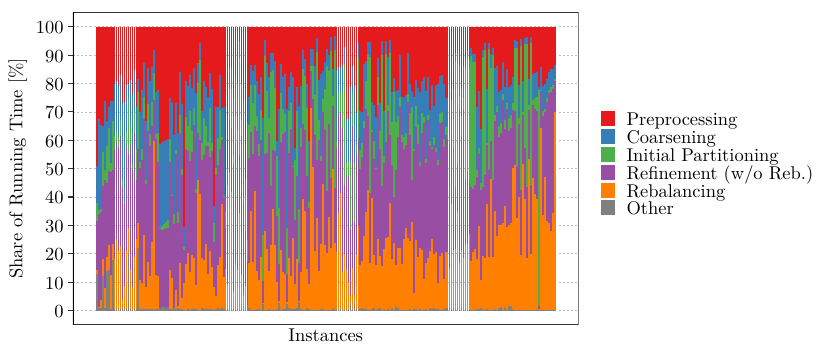}
    \else
    \tikzsetnextfilename{running_time_share_irregular}%
    \input{figures/running_time_share_irregular}%
    \fi

    \vspace{-10pt}
    \caption{
        Running time of different components of our full algorithm on Set~I with 96 threads, as fraction of total time.
        Each bar corresponds to a single instance (graph and $k$), in ascending order of total running time.
    }
    \label{fig:time_shares_i}
\end{figure}

\begin{figure}
    \ifpdfplots
    \includegraphics{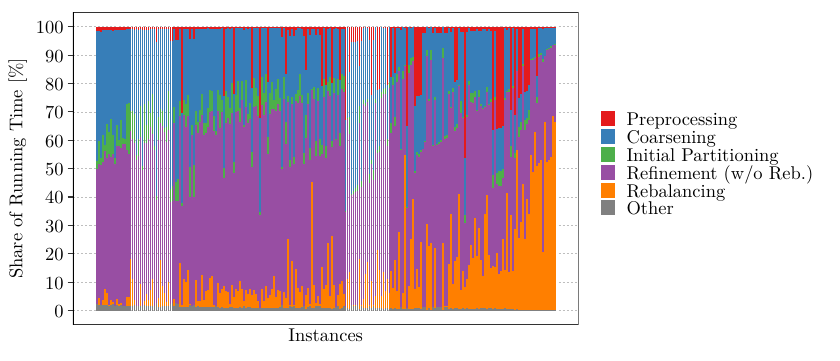}
    \else
    \tikzsetnextfilename{running_time_share_regular}%
    \input{figures/running_time_share_regular}%
    \fi

    \vspace{-10pt}
    \caption{
        Running time of different components of our full algorithm on Set~R with 96 threads, as fraction of total time.
        Each bar corresponds to a single instance (graph and $k$), in ascending order of total running time.
    }
    \label{fig:time_shares_r}
\end{figure}

\end{document}